\newcommand{\R}{\ensuremath{\mathbb{R}}}
\newcommand{\Rbb}{\mathbb{R}}
\newcommand{\eps}{\varepsilon}
\newtheorem{theorem}{Theorem}
\newtheorem{lemma}{Lemma}
\DeclareMathOperator*{\argmax}{arg\,max}
\newcommand{\specialcell}[2][c]{%
  \begin{tabular}[#1]{@{}c@{}}#2\end{tabular}}
\title{Practical and Optimal LSH for Angular Distance}
\author{
Alexandr Andoni\thanks{The authors are listed in alphabetical order.} \\
Columbia University \\
\and
Piotr Indyk \\
MIT \\
\and
Thijs Laarhoven \\
TU Eindhoven \\
\and
Ilya Razenshteyn \\
MIT \\
\and
Ludwig Schmidt \\
MIT \\
}
\newtheorem{corollary}{Corollary}
\newtheorem{definition}{Definition}
\begin{document}

\maketitle

\begin{abstract}
  We show the existence of a Locality-Sensitive Hashing (LSH) family for the angular distance that
 yields an approximate Near Neighbor Search algorithm with the asymptotically optimal  running time exponent.
 Unlike earlier algorithms with this property (e.g.,  Spherical LSH~\cite{AINR-subLSH,AR15}), our algorithm is also practical, improving upon the well-studied hyperplane LSH~\cite{Cha02} in practice.
We also introduce a \emph{multiprobe} version of this algorithm, and conduct experimental evaluation  on real and synthetic data sets.

  We complement the above positive results with a fine-grained lower bound for the quality of any LSH family
  for angular distance.
  Our lower bound implies that the above LSH family exhibits a trade-off between evaluation time and quality that is close to optimal
  for a natural class of LSH functions. 
\end{abstract}

\onehalfspace

\section{Introduction}
Nearest neighbor search is a key algorithmic problem with applications in several fields including computer vision, information retrieval, and machine learning~\cite{nnlv}.
Given a set of $n$ points $P \subset \R^d$, the goal is to build a data structure that answers nearest neighbor queries efficiently: for a given query point $q \in \R^d$, find the point $p \in P$ that is closest to $q$ under an appropriately chosen distance metric. 
The main algorithmic design goals are usually a fast query time, a small memory footprint, and---in the approximate setting---a good quality of the returned solution.

There is a wide range of algorithms for nearest neighbor search based
on techniques such as space partitioning with indexing, as well as
dimension reduction or sketching~\cite{samet2006foundations}.
A popular method for point sets in high-dimensional spaces is Locality-Sensitive Hashing (LSH)~\cite{HIM12,Cha02}, an approach that offers a \emph{provably} sub-linear query time and sub-quadratic space complexity, and
has been shown to achieve  good empirical performance in a variety of applications~\cite{nnlv}.
The method relies on the notion of \emph{locality-sensitive hash functions}.
Intuitively, a hash function is locality-sensitive if its probability of collision is higher for ``nearby'' points than for points that are ``far apart''.
More formally, two points are nearby if their distance is at most
$r_1$, and they are far apart if their distance is at
least~$r_2=c \cdot r_1$, where $c > 1$ quantifies the gap between ``near'' and ``far''.
The quality of a hash function is characterized by two key parameters: $p_1$ is the collision probability for nearby points, and~$p_2$ is the collision probability for points that are far apart.
The gap between $p_1$ and $p_2$ determines how ``sensitive'' the hash function is to changes in distance, and this property is captured by the parameter $\rho=\frac{\log 1 / p_1}{\log 1 / p_2}$, which can usually be expressed as a function of the distance gap $c$. 
The problem of designing good locality-sensitive hash functions and LSH-based efficient nearest neighbor search algorithms has attracted significant attention over the last few years.

In this paper, we focus on LSH for the Euclidean distance \emph{on the unit sphere}, which is an important special case for several reasons.
First, the spherical case is relevant in practice: Euclidean distance on a sphere corresponds to the \emph{angular distance} or \emph{cosine similarity}, which are commonly used in applications such as comparing image feature vectors \cite{JDS11}, speaker representations \cite{SSL14}, and tf-idf data sets \cite{sundaram2013streaming}.
Moreover, on the theoretical side, the paper \cite{AR15} shows a reduction from Nearest Neighbor Search in the \emph{entire} Euclidean space to the spherical case.
These connections lead to a natural question: what are good LSH families for this special case?

On the theoretical side, the recent work of \cite{AINR-subLSH, AR15} gives the best known provable guarantees for LSH-based nearest neighbor search w.r.t.\ the Euclidean distance on the unit sphere.
Specifically, their algorithm has a query time of $O(n^\rho)$ and
space complexity of $O(n^{1 +  \rho})$ for $\rho= \frac{1}{2 c^2 -
  1}.$\footnote{This running time is known to be
  essentially optimal for a large class of algorithms~\cite{Dubiner, ar15lower}.}
E.g., for the approximation factor $c=2$, the algorithm achieves a query time of $n^{1/7+o(1)}$.
At the heart of the algorithm is an LSH scheme called {\em Spherical LSH}, which works for unit vectors. 
Its key property is that it can distinguish between distances $r_1= \sqrt{2}/c$ and $r_2=\sqrt{2}$ with probabilities yielding $\rho =  \frac{1}{2 c^2 -1}$ (the formula for the full range of distances is more complex and given in Section~\ref{sec:crossPoly}). 
Unfortunately, the scheme as described in the paper is not applicable in practice as it is based on rather complex hash functions that are very time consuming to evaluate. 
E.g., simply evaluating a~single hash function from \cite{AR15} can take more time than a linear scan over $10^6$ points.
Since an LSH data structure contains many individual hash functions, using their scheme would be slower than a~simple linear scan over all points in $P$ unless the number of points $n$ is extremely large.

On the practical side, the hyperplane LSH introduced in the influential work of Charikar \cite{Cha02} has worse theoretical guarantees, but works well in practice.
Since the hyperplane LSH can be implemented very efficiently, it is the standard hash function in practical LSH-based nearest neighbor algorithms\footnote{Note that if the data points are {\em binary}, more efficient LSH schemes exist~\cite{shrivastava2012fast,  shrivastava2014densifying}. However, in this paper we consider algorithms for general (non-binary) vectors.}  and the resulting implementations has been shown to improve over a linear scan on real data by multiple orders of magnitude~\cite{lv2007multi,sundaram2013streaming}.    

The aforementioned discrepancy between the theory and practice of LSH raises an important question: is there a locality-sensitive hash function with {\em optimal} guarantees that also improves over the hyperplane LSH in practice?

In this paper we show that there is a family of locality-sensitive hash functions that achieves both objectives.
Specifically, the hash functions match the theoretical guarantee of Spherical LSH from \cite{AR15} and, when combined with additional techniques, give better experimental results than the hyperplane LSH.
More specifically, our contributions are:

\paragraph{Theoretical guarantees for the cross-polytope LSH.} We show that a hash function based on randomly rotated cross-polytopes (i.e., unit balls of the $\ell_1$-norm) achieves the same parameter $\rho$ as the Spherical LSH scheme in \cite{AR15}, assuming data points are unit vectors.
While the cross-polytope LSH family has been proposed by researchers before~\cite{terasawa2007spherical,eshghi2008locality} we give the first theoretical analysis of its performance.  

\paragraph{Fine-grained lower bound for cosine similarity LSH.} 
To highlight the difficulty of obtaining optimal {\em and} practical LSH schemes,
we prove the first {\em non-asymptotic} lower bound on the trade-off
between the collision probabilities $p_1$ and $p_2$.
So far, the optimal LSH upper bound $\rho=\tfrac{1}{2c^2-1}$ (from
\cite{AINR-subLSH, AR15} and cross-polytope from here) attain this bound only in
the limit, as $p_1,p_2\to 0$. Very small~$p_1$ and~$p_2$ are undesirable
since the hash evaluation time is often proportional to $1/p_2$. Our
lower bound proves this is unavoidable: if we require $p_2$ to be
large, $\rho$ has to be suboptimal.

This result has two important implications for designing practical
hash functions. First, it shows that the trade-offs achieved by the
cross-polytope LSH and the scheme of \cite{AINR-subLSH, AR15} are essentially
optimal.  Second, the lower bound guides design of future LSH
functions: if one is to significantly improve upon the cross-polytope
LSH, one has to design a hash function that is computed more efficiently than by explicitly
enumerating its range (see Section~\ref{sec_lower} for a more detailed
discussion).

\paragraph{Multiprobe scheme for the cross-polytope LSH.} The space
complexity of an LSH data structure is sub-\emph{quadratic}, but even
this is often too large (i.e., strongly super-\emph{linear} in the
number of points), and several methods have been proposed to address
this issue.  Empirically, the most efficient scheme is
multiprobe LSH~\cite{lv2007multi}, which leads to a significantly reduced memory
footprint for the hyperplane LSH.  In order to make the cross-polytope
LSH competitive in practice with the multiprobe hyperplane LSH, we propose a novel
multiprobe scheme for the cross-polytope LSH.  

We complement these contributions with an experimental evaluation on both real and synthetic data (SIFT vectors, tf-idf data, and a random point set). In order to make the cross-polytope LSH practical, we combine it with fast pseudo-random rotations~\cite{AC09} via
the Fast Hadamard Transform, and feature hashing~\cite{WDLSA09} to exploit sparsity of data.
Our results show that for data sets with around $10^5$ to~$10^8$ points, our multiprobe variant of the cross-polytope LSH is up to $10\times$ faster than an efficient implementation of the hyperplane LSH, and up to $700\times$ faster than a linear scan.
To the best of our knowledge, our combination of techniques provides the first ``exponent-optimal'' algorithm that empirically improves over the hyperplane LSH in terms of query time for an \emph{exact} nearest neighbor search.

\subsection{Related work}
The cross-polytope LSH functions were originally proposed in~\cite{terasawa2007spherical}. However, the analysis in that paper was mostly experimental. Specifically, the probabilities $p_1$ and $p_2$ of the proposed LSH functions were estimated empirically using the Monte Carlo method. Similar hash functions were later proposed in~\cite{eshghi2008locality}. The latter paper also uses DFT to speed-up the random matrix-vector matrix multiplication operation. Both of the aforementioned papers consider only the {\em single-probe} algorithm.

There are several works that show lower bounds on the quality of LSH hash functions~\cite{motwani2007lower,Dubiner,o2014optimal,ar15lower}. However, those papers provide only a lower bound on the  $\rho$ parameter for asymptotic values of $p_1$ and $p_2$, as opposed to an actual trade-off between these two quantities. In this paper we provide such a trade-off, with implications as outlined in the introduction.

\section{Preliminaries}

We use $\|.\|$ to denote the Euclidean (a.k.a. $\ell_2$) norm on $\Rbb^d$. We also use $S^{d-1}$ to denote the unit sphere in $\Rbb^d$ centered in the origin.
The Gaussian distribution with mean zero and variance of one is denoted by $N(0, 1)$.
Let $\mu$ be a normalized Haar measure on $S^{d-1}$  (that is, $\mu(S^{d-1}) = 1$).
Note that $\mu$ it corresponds to the uniform distribution over
$S^{d-1}$. We also let $u \sim S^{d-1}$ be a point sampled from $S^{d-1}$ uniformly at random.
For $\eta \in \Rbb$ we denote
$$
\Phi_c(\eta) = \underset{X \sim N(0, 1)}{\mathrm{Pr}}[X \geq \eta] = \frac{1}{\sqrt{2 \pi}} \int_{\eta}^{\infty} e^{-t^2 / 2} \, dt.
$$

We will be interested in the Near Neighbor Search on the sphere $S^{d-1}$ with respect to the Euclidean distance.
Note that the angular distance can be expressed via the Euclidean distance between normalized vectors, so our results
apply to  the angular distance as well.

\begin{definition}

  Given an $n$-point dataset $P \subset S^{d-1}$ on the sphere, the goal
  of the \emph{$(c, r)$-Approximate Near Neighbor problem (ANN)}
  is to build a data structure that, given a query $q \in S^{d-1}$
  with the promise that there exists a datapoint $p \in P$ with $\|p - q\| \leq r$,
  reports a datapoint $p' \in P$ within distance $cr$ from $q$.
\end{definition}

\begin{definition}
    We say that a hash family $\mathcal{H}$ on the sphere $S^{d-1}$ is 
    \emph{$(r_1, r_2, p_1, p_2)$-sensitive}, if
    for every $p, q \in S^{d-1}$ one has
    $\underset{h \sim \mathcal{H}}{\mathrm{Pr}}[h(x) = h(y)] \geq p_1$ if $\|x - y\| \leq r_1$,
    and $\underset{h \sim \mathcal{H}}{\mathrm{Pr}}[h(x) = h(y)] \leq p_2$ if $\|x - y\| \geq r_2$,
\end{definition}

It is known~\cite{HIM12} that an \emph{efficient} $(r, cr, p_1, p_2)$-sensitive hash family implies
a data structure for $(c, r)$-ANN with space $O(n^{1 + \rho}/p_1 + dn)$ and query time $O(d \cdot n^{\rho}/p_1)$,
where $\rho = \frac{\log(1 / p_1)}{\log(1 / p_2)}$.

\section{Cross-polytope LSH}
\label{sec:crossPoly}

In this section, we describe the cross-polytope LSH, analyze it, and show how to make it practical.
First, we recall the definition of the cross-polytope LSH \cite{terasawa2007spherical}: 
Consider the following hash family $\mathcal{H}$ for points on a unit sphere $S^{d-1} \subset \Rbb^d$. Let $A \in \Rbb^{d \times d}$
be a random matrix with i.i.d.\ Gaussian entries (``a random rotation'').
To hash a point $x \in S^{d-1}$, we compute $y = Ax / \|Ax\| \in S^{d-1}$ and then find the point closest to $y$ from $\{\pm e_i\}_{1 \leq i \leq d}$, where $e_i$ is the $i$-th standard basis vector of $\Rbb^d$. We use the closest neighbor
as a hash of $x$. 

The following theorem bounds the collision probability for two points under the above family $\mathcal{H}$.

\begin{theorem}
  \label{rho_upper}
  Suppose that $p, q \in S^{d-1}$ are such that $\|p - q\| = \tau$, where $0 < \tau < 2$.
  Then,
  $$
  \ln \frac{1}{\underset{h \sim \mathcal{H}}{\mathrm{Pr}}\bigl[h(p) = h(q)\bigr]} = \frac{\tau^2}{4 - \tau^2} \cdot \ln d + O_{\tau}(\ln \ln d) \; .
  $$
\end{theorem}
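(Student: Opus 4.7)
The plan is to rewrite the collision probability as a Gaussian integral and apply Laplace's method. Set $u = Ap$ and $v = Aq$; by rotational invariance of the Gaussian matrix, the pairs $(u_i, v_i)_{i=1}^d$ are i.i.d.\ centered bivariate Gaussians with unit variances and $\mathrm{Cov}(u_i, v_i) = \langle p, q\rangle = 1 - \tau^2/2 =: \rho$. The hash $h(x)$ is determined by $\arg\max_{(i,\sigma)\in[d]\times\{\pm 1\}} \sigma x_i$, since the positive scaling by $1/\|Ax\|$ does not affect the argmax. Conditioning on $(u_1,v_1)=(a,b)$ and using the independence of the other pairs, symmetry among the $2d$ possible hash outputs gives
\begin{equation*}
  \Pr\bigl[h(p)=h(q)\bigr] \;=\; 2d \int_0^\infty\!\!\int_0^\infty G(a,b)^{d-1}\, \phi_\rho(a,b)\,da\,db,
\end{equation*}
where $G(a,b) = \Pr[|U|\le a,\,|V|\le b]$ and $\phi_\rho$ is the bivariate Gaussian density with correlation $\rho$, with $(U,V)$ distributed like any $(u_i,v_i)$.

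The core of the argument is a saddle-point analysis of this integral. Writing $\varphi(t)=e^{-t^2/2}/\sqrt{2\pi}$, the Mills-ratio asymptotic $\Phi_c(t)\sim \varphi(t)/t$ and the fact that for $\rho<1$ the joint tail $\Pr[|U|\ge a,|V|\ge b]$ is exponentially smaller than $\Phi_c(a)$ in the regime of interest yield $\ln G(a,b) = -2\Phi_c(a) - 2\Phi_c(b) + (\text{smaller})$. The log-integrand is therefore
\begin{equation*}
  -\frac{a^2 - 2\rho ab + b^2}{2(1-\rho^2)} \;-\; 2(d-1)\bigl[\Phi_c(a)+\Phi_c(b)\bigr] \;+\; O(1).
\end{equation*}
Setting the gradient to zero and using the $a\leftrightarrow b$ symmetry, the unique symmetric saddle $a^\star=b^\star$ satisfies $a^\star/(1+\rho) = 2(d-1)\varphi(a^\star)$, which solves to $(a^\star)^2 = 2\ln d - \ln\ln d + O_\tau(1)$. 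At this saddle, the quadratic form equals $-(a^\star)^2/(1+\rho) = -\tfrac{2\ln d}{1+\rho} + O_\tau(\ln\ln d)$ and the $(d-1)\Phi_c$ contribution is $O_\tau(1)$.

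A standard Laplace estimate then finishes the upper bound. The Hessian at the saddle is dominated by $\partial^2_a \sim -(a^\star)^2/(1+\rho) = \Theta(\ln d)$ (the off-diagonal $\partial_{ab}=\rho/(1-\rho^2)$ is $O_\tau(1)$ and hence negligible), so the effective peak volume is $\tilde\Theta(1/\ln d)$, contributing only a $\ln\ln d$ correction in the exponent. Hence
\begin{equation*}
  \Pr\bigl[h(p)=h(q)\bigr] \;=\; 2d \cdot \exp\!\Bigl(-\tfrac{2\ln d}{1+\rho} + O_\tau(\ln\ln d)\Bigr),
\end{equation*}
and substituting $\rho = 1 - \tau^2/2$ yields $\ln\tfrac{1}{\Pr[h(p)=h(q)]} = \tfrac{1-\rho}{1+\rho}\ln d + O_\tau(\ln\ln d) = \tfrac{\tau^2}{4-\tau^2}\ln d + O_\tau(\ln\ln d)$. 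The matching lower bound uses the same saddle window: on $(a,b)\in[a^\star\pm 1/\sqrt{\ln d}]^2$ we have $G^{d-1}\ge e^{-O(1)}$, so the integrand is at least a constant multiple of $\phi_\rho(a^\star,a^\star)/\ln d$.

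The main obstacle will be the uniform tail control required to keep the additive error in the exponent at $O_\tau(\ln\ln d)$: I must verify that the integrand outside a polylog-sized window around $(a^\star,a^\star)$ is negligible, which in turn needs sharp uniform asymptotics for both $\Phi_c$ and the correlated bivariate tail $\Pr[|U|\ge a,|V|\ge b]$. The bivariate tail is the delicate piece; its upper bound must be of order $e^{-a^2/(1+\rho)}/\mathrm{poly}(a)$ uniformly in the pre-factor coming from $1-\rho^2$, for all $(a,b)$ on the scale $\sqrt{2\ln d}$.
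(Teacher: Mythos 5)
Your proposal is correct in its essentials and reaches the right answer, but it takes a genuinely different computational route from the paper. Both proofs begin from the same reduction: by spherical symmetry the collision probability equals $2d\,\mathbb{E}\bigl[\sigma(X_1,Y_1)^{d-1}\bigr]$, where $\sigma$ (your $G$) is the bivariate Gaussian probability of landing in the rectangle $\{|X_2|\le a,\,|\alpha X_2 + \beta Y_2|\le b\}$. Where the two arguments diverge is in evaluating this expectation. The paper replaces $\sigma(u,v)$ by $1-e^{-\Delta(u,v)^2/2}$ with $\Delta(u,v)=\min\{u,\alpha u+\beta v\}$ (its Lemmas~\ref{measure_typical}--\ref{measure_wedge} make this rigorous via the Gaussian measure of planar convex complements and wedges), rewrites $\mathbb{E}[\sigma^{d-1}]$ as a tail integral $\int_0^1 \Pr[\sigma \ge t^{1/(d-1)}]\,dt$, and after a change of variables recognizes a Beta function, so that Stirling's formula directly delivers the $d^{-4/(4-\tau^2)}$ decay. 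Your proof instead keeps the two-dimensional integral over $(a,b)$, linearizes $\ln G$ via inclusion--exclusion and the Mills ratio, and runs a saddle-point (Laplace) analysis; the symmetric saddle $a^\star=b^\star$ with $(a^\star)^2 = 2\ln d - \ln\ln d + O_\tau(1)$ and the Hessian scale $\Theta(\ln d)$ then produce the same exponent. The paper's route is cleaner because the Beta-function identity sidesteps any need to control the integral away from the critical point, whereas your route requires the uniform tail control you flag in your final paragraph (outside a $\tilde O(1/\sqrt{\ln d})$ window, and the bivariate tail $\Pr[|U|\ge a,|V|\ge b]$). Those estimates are standard but not free; the paper's Appendix~\ref{app_gaussian} lemmas are precisely the rigorous stand-ins for what your proposal defers, and the ``real proof'' in Appendix~\ref{app_upper} shows how choosing $\eps=1/d$ in those lemmas produces exactly the $O_\tau(\ln\ln d)$ slack that your saddle-point heuristic predicts.

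One small point worth making explicit: your claim that the joint tail $\Pr[|U|\ge a,|V|\ge b]$ is a lower-order correction to $\ln G$ near the saddle relies on $\rho = 1-\tau^2/2 < 1$, i.e., $\tau$ bounded away from $0$; the exponent gap $\tfrac{1}{1+\rho}-\tfrac12 = \tfrac{1-\rho}{2(1+\rho)}$ degrades as $\tau\to 0$, but since the theorem is stated with $O_\tau(\cdot)$ for fixed $\tau$, this is consistent with the claimed bound. Also check that in your Hessian computation you keep the correct sign: $\Phi_c'(a) = -\varphi(a)$ and $\varphi'(a) = -a\varphi(a)$, so $\partial_a^2\bigl[-2(d-1)\Phi_c(a)\bigr] = -2(d-1)a\varphi(a)$, giving $\partial_a^2 \approx -(a^\star)^2/(1+\rho) - 1/(1-\rho^2)$, which is indeed negative of order $\ln d$, as you state.
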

Before we show how to prove this theorem, we briefly describe its implications.
Theorem \ref{rho_upper} shows that the cross-polytope LSH achieves essentially the same bounds on the collision probabilities as the (theoretically)
optimal LSH for the sphere from~\cite{AR15} (see Section ``Spherical LSH'' there). In particular, substituting the bounds from Theorem~\ref{rho_upper} for the cross-polytope LSH into the standard reduction from Near Neighbor Search to LSH~\cite{HIM12},
we obtain the following data structure with sub-quadratic space and sublinear query time for Near Neighbor Search on a sphere.

\begin{corollary}
  The $(c, r)$-ANN on a unit sphere $S^{d-1}$ can be solved in space $O(n^{1 + \rho} + dn)$ and query time
  $O(d \cdot n^{\rho})$, where
  $
  \rho = \frac{1}{c^2} \cdot \frac{4 - c^2 r^2}{4 - r^2} + o(1) \; .
  $
\end{corollary}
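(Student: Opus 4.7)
The corollary is an immediate consequence of Theorem~\ref{rho_upper} combined with the standard reduction from ANN to LSH, so the proof is largely a bookkeeping exercise on the exponents. Here is how I would organize it.

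First, I would instantiate Theorem~\ref{rho_upper} at the two relevant scales. For a pair of points at distance $r_1 = r$, the collision probability $p_1$ of the cross-polytope LSH satisfies
\[
\ln\frac{1}{p_1} \;=\; \frac{r^2}{4-r^2}\,\ln d \;+\; O_r(\ln\ln d),
\]
and for a pair at distance $r_2 = cr$ (which lies in $(0,2)$ since we are on $S^{d-1}$ with $cr<2$),
\[
\ln\frac{1}{p_2} \;=\; \frac{c^2 r^2}{4-c^2 r^2}\,\ln d \;+\; O_{cr}(\ln\ln d).
\]
Thus $\mathcal H$ is $(r, cr, p_1, p_2)$-sensitive with the above parameters.

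Next, I would compute
\[
\rho \;=\; \frac{\ln(1/p_1)}{\ln(1/p_2)}
\;=\; \frac{\tfrac{r^2}{4-r^2}\ln d + O(\ln\ln d)}{\tfrac{c^2 r^2}{4-c^2 r^2}\ln d + O(\ln\ln d)}
\;=\; \frac{r^2/(4-r^2)}{c^2 r^2/(4-c^2 r^2)} \;+\; o(1)
\;=\; \frac{1}{c^2}\cdot\frac{4-c^2 r^2}{4-r^2} \;+\; o(1),
\]
where the $o(1)$ term absorbs the $O(\ln\ln d/\ln d)$ error coming from both numerator and denominator (with $c$ and $r$ treated as constants).

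Finally, I would feed this $(r, cr, p_1, p_2)$-sensitive family into the standard reduction of~\cite{HIM12} recalled in the Preliminaries, which yields a data structure of space $O(n^{1+\rho}/p_1 + dn)$ and query time $O(d\,n^\rho/p_1)$. The only minor subtlety is the $1/p_1$ factor: since $1/p_1 = d^{\,r^2/(4-r^2)+o(1)}$ is only polynomial in $d$, it can be absorbed into $n^{\rho}$ by inflating the exponent by $o(1)$ (under the standing assumption that $d$ is not super-polynomial in $n$, which is the regime of interest for ANN). This gives the stated space $O(n^{1+\rho}+dn)$ and query time $O(d\,n^\rho)$.

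The only step with any content is the first one, namely applying Theorem~\ref{rho_upper}; I expect no real obstacle in the rest. The one thing to be careful about is that Theorem~\ref{rho_upper} is stated only for $\tau \in (0,2)$, so I would note explicitly that $cr<2$ is implicit in the problem (otherwise the ANN promise is vacuous on $S^{d-1}$), and that the constants hidden in $O_\tau(\cdot)$ are uniform once $r$ and $c$ are fixed, so the two error terms combine cleanly into a single $o(1)$ as $d\to\infty$.
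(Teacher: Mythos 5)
Your proof is correct and follows exactly the route the paper intends: the corollary is stated as an immediate consequence of Theorem~\ref{rho_upper} plugged into the standard LSH-to-ANN reduction of~\cite{HIM12}, and your computation of $\rho$ and the remark about absorbing the $1/p_1 = d^{O(1)}$ factor into $n^{\rho}$ (via the $o(1)$ in the exponent) are precisely the bookkeeping that makes this work.
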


We now outline the proof of Theorem~\ref{rho_upper}. For the full proof, see Appendix~\ref{app_upper}.

Due to the spherical symmetry of Gaussians,
we can assume that
$p = e_1$ and $q = \alpha e_1 + \beta e_2$, where $\alpha, \beta$ are such that $\alpha^2 + \beta^2 = 1$ and
$(\alpha - 1)^2 + \beta^2 = \tau^2$.
Then, we expand the collision probability:
\begin{align}
\underset{h \sim \mathcal{H}}{\mathrm{Pr}}[h(p) = h(q)] & = 2d \cdot \underset{h \sim \mathcal{H}}{\mathrm{Pr}}[h(p) = h(q) = e_1]\nonumber
\\& = 2d \cdot \underset{u, v \sim N(0, 1)^d}{\mathrm{Pr}}[\forall i \enspace |u_i| \leq u_1 \mbox{ and }
  |\alpha u_i + \beta v_i| \leq \alpha u_1 + \beta v_1] \nonumber\\
& = 2d \cdot \underset{X_1, Y_1}{\mathrm{E}}\left[\underset{X_2, Y_2}{\mathrm{Pr}}\Bigl[|X_2| \leq X_1 \mbox{ and }
    |\alpha X_2 + \beta Y_2| \leq \alpha X_1 + \beta Y_1\Bigr]^{d - 1}\right],
\label{prob_exp}
\end{align}
where $X_1, Y_1, X_2, Y_2 \sim N(0, 1)$. Indeed, the first step is due to the spherical symmetry of the hash family,
the second step follows from the above discussion about replacing a random orthogonal matrix with a Gaussian one and that one can
assume w.l.o.g. that $p = e_1$ and $q = \alpha e_1 + \beta e_2$; the last step is due to the independence of the entries of $u$ and $v$.

Thus, proving Theorem~\ref{rho_upper} reduces to estimating the right-hand side of~(\ref{prob_exp}). Note that the probability
$\mathrm{Pr}[|X_2| \leq X_1 \mbox{ and } |\alpha X_2 + \beta Y_2| \leq \alpha X_1 + \beta Y_1]$
is equal to the Gaussian area of the planar set $S_{X_1, Y_1}$ shown in Figure~\ref{gaussian_set}.
The latter is \emph{heuristically} equal to $1 - e^{-\Delta^2 / 2}$, where $\Delta$ is the distance from the origin to the complement of $S_{X_1, Y_1}$, which is easy to compute (see Appendix~\ref{app_gaussian} for the precise statement of this argument). Using this estimate,
we compute~(\ref{prob_exp}) by taking the outer expectation.

\subsection{Making the cross-polytope LSH practical}
\label{sec:cppractical}
As described above, the cross-polytope LSH is not quite practical. The main bottleneck is sampling, storing, and applying a random rotation.
In particular, to multiply a random Gaussian matrix with a vector, we need time proportional to $d^2$, which is
infeasible for large $d$.

\paragraph{Pseudo-random rotations.} To rectify this issue, we instead use \emph{pseudo-random rotations}. Instead of multiplying
an input vector $x$ by a random Gaussian matrix, we apply the following linear transformation:
$x \mapsto H D_3 H D_2 H D_1 x$,
where $H$ is the Hadamard transform, and $D_i$ for $i \in \{1, 2, 3\}$ is a random diagonal $\pm 1$-matrix. Clearly, this is an orthogonal
transformation, which one can store in space $O(d)$ and evaluate in time $O(d \log d)$ using the Fast Hadamard Transform. This is similar to
pseudo-random rotations used in the context of LSH~\cite{DKS11}, dimensionality reduction~\cite{AC09}, or compressed sensing~\cite{AR14}. While we
are currently
not aware
how to prove rigorously that such pseudo-random rotations
perform as
well as the fully random ones, empirical evaluations show that
three
applications of $H D_i$ 
are exactly equivalent to applying a true random rotation (when $d$ tends to infinity). We note
that only {\em two} applications of $H D_i$ are not sufficient.

\paragraph{Feature hashing.} While we can apply a pseudo-random rotation in time $O(d \log d)$,
even this can be too slow. E.g., consider an input
vector $x$ that is \emph{sparse}: the number of non-zero entries of
$x$ is $s$ much smaller than $d$. In this case, we can evaluate the hyperplane LSH
from~\cite{Cha02} in time $O(s)$, while computing the cross-polytope
LSH (even with pseudo-random rotations) still takes time
$O(d \log d)$. To speed-up the cross-polytope LSH for sparse vectors, we apply feature hashing~\cite{WDLSA09}: before performing a pseudo-random rotation, we reduce the dimension from
$d$ to $d' \ll d$ by applying a linear map $x \mapsto Sx$, where $S$ is a random sparse $d' \times d$ matrix, whose columns have \emph{one} non-zero $\pm 1$ entry sampled uniformly. This way, the evaluation time becomes $O(s + d' \log d')$. \footnote{Note that one can apply Lemma~2 from the arXiv version of~\cite{WDLSA09}
to claim that---after such a dimension reduction---the distance between \emph{any} two points remains sufficiently concentrated for the bounds
from Theorem~\ref{rho_upper} to still hold (with $d$ replaced by~$d'$).}

\begin{figure}
  \centering
  \caption{}
  \begin{subfigure}{0.45\textwidth}
    \centering
    \includegraphics[width=.8\textwidth]{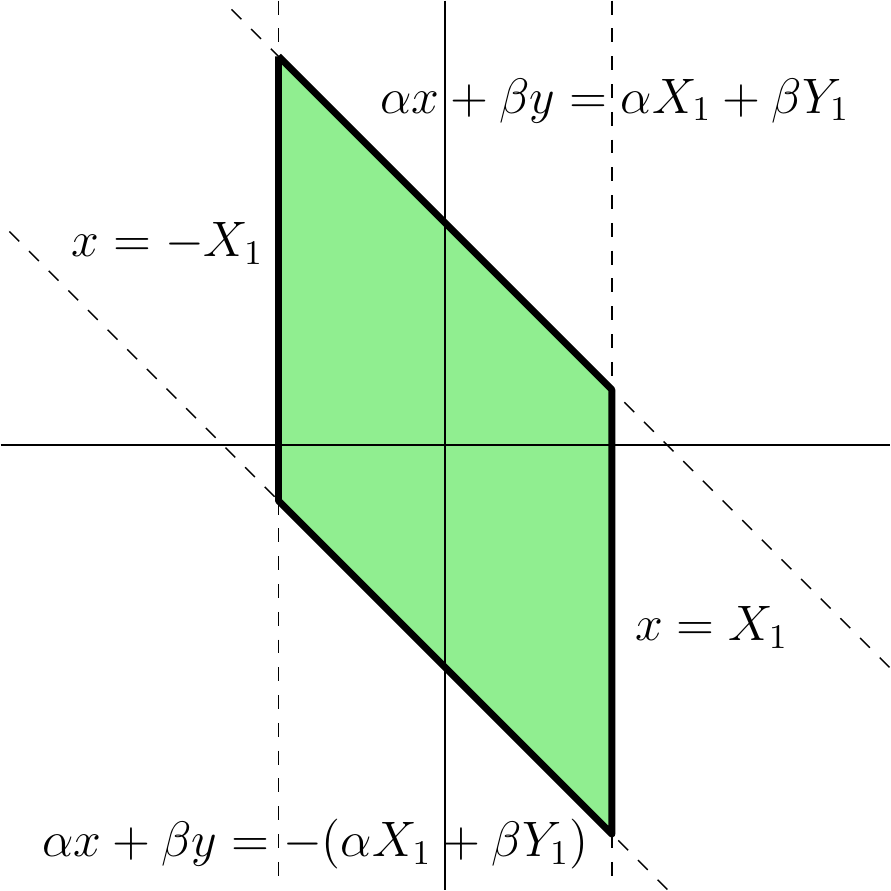}
    \caption{The set appearing in the analysis of the cross-polytope LSH:
    $S_{X_1 Y_1} = \{|x| \leq X_1 \mbox{ and } |\alpha x + \beta y| \leq \alpha X_1 + \beta Y_1\}$.}
    \label{gaussian_set}
  \end{subfigure}
  \qquad
  \begin{subfigure}{0.45\textwidth}
    \centering
    \begin{tikzpicture}
    \pgfplotsset{lowerboundplot/.style={%
      scale only axis,
      enlarge x limits=false,
      enlarge y limits=false,
      ymin=0.15,
      ymax=0.4,
      xmin=.1,
      xmax=1e16,
      width=4.7cm,
      height=3.5cm,
      ytick={0.15,0.2,0.25,0.3,0.35,0.4},
      xtick={1e16,1e12,1e8,1e4,1},
      no markers,
      grid style={dotted,gray,thin},
      grid=major,
      ylabel={Sensitivity $\rho$},
      ylabel shift=-5pt,
      xlabel={Number of parts $T$},
      legend cell align=left,
      legend style={font=\footnotesize},
      every axis plot/.append style={line width=1pt},
      cycle list={{red, dashed},{blue}},
    }}
    \begin{semilogxaxis}[lowerboundplot]
      \addplot table[x=num_parts,y=rho] {plot_data/cross_polytope_pgfplots_data.txt};
      \addplot table[x=num_parts,y=rho] {plot_data/lower_bound_pgfplots_data.txt};
      \legend{Cross-polytope LSH,Lower bound}
    \end{semilogxaxis}
    \end{tikzpicture}
    \caption{Trade-off between $\rho$ and the number of parts for distances
      $\sqrt{2} / 2$ and $\sqrt{2}$ (approximation $c=2$); both bounds
      tend to $1/7$ (see discussion in Section~\ref{sec_lower}).}
    \label{fig_lower}
  \end{subfigure}
\end{figure}

\paragraph{``Partial'' cross-polytope LSH.}
\label{sec:partial_cp}
In the above discussion, we defined the cross-polytope LSH as a hash family that returns the closest neighbor among
$\{\pm e_i\}_{1 \leq i \leq d}$ as a hash (after a (pseudo-)random rotation). In principle, we do not have to consider all $d$ basis vectors when computing the closest neighbor.
By restricting the hash to $d' \leq d$ basis vectors instead, Theorem~\ref{rho_upper} still holds for the new hash family (with $d$ replaced by $d'$) since the analysis is essentially dimension-free.
This slight generalization of the cross-polytope LSH turns out to be useful for experiments (see Section~\ref{sec:experiments}).
Note that the case $d' = 1$ corresponds to the hyperplane LSH.

\section{Lower bound}
\label{sec_lower}

Let $\mathcal{H}$ be a hash family on $S^{d-1}$. For $0 < r_1 < r_2 < 2$ we would like to understand the trade-off between
$p_1$ and $p_2$, where $p_1$ is the \emph{smallest} probability of collision under $\mathcal{H}$ for points at distance \emph{at most $r_1$}
and $p_2$ is the \emph{largest} probability of collision for points at distance \emph{at least $r_2$}.
We focus on the case $r_2 \approx \sqrt{2}$ because setting $r_2$ to $\sqrt{2} - o(1)$ (as $d$ tends to infinity) allows us to replace $p_2$ with the following quantity that is somewhat easier to handle:
\[
p_2^* = \underset{\substack{h \sim \mathcal{H}\\u, v \sim S^{d-1}}}{\mathrm{Pr}}[h(u) = h(v)].
\]
This quantity is at most $p_2 + o(1)$, since the distance between two random points on a unit sphere $S^{d-1}$ is tightly concentrated around $\sqrt{2}$.
So for a hash family $\mathcal{H}$ on a unit sphere $S^{d-1}$, we would like to understand the upper bound on $p_1$ in terms of $p_2^*$
and $0 < r_1 < \sqrt{2}$.

For $0 \leq \tau \leq \sqrt{2}$ and $\eta \in \Rbb$, we define
\[
\Lambda(\tau, \eta) = \underset{X, Y \sim N(0, 1)}
  {\mathrm{Pr}}\left[X \geq \eta \mbox{ and } \left(1 - \frac{\tau^2}{2}\right) \cdot X + \sqrt{\tau^2 - \frac{\tau^4}{4}} \cdot Y \geq \eta\right] \Bigm/ \underset{X \sim N(0, 1)}{\mathrm{Pr}}[X \geq \eta] \;.
\]

We are now ready to formulate the main result of this section.

\begin{theorem}
  \label{rho_lower}
  Let $\mathcal{H}$ be a hash family on $S^{d-1}$ such that every function in $\mathcal{H}$ partitions the sphere into
  at most $T$ parts of measure
  at most $1/2$. Then we have
  $
  p_1 \leq \Lambda(r_1, \eta) + o(1)
  $,
  where $\eta \in \Rbb$ is such that $\Phi_c(\eta) = p_2^*$ and $o(1)$ is a quantity that depends on $T$ and $r_1$ and
  tends to $0$ as $d$ tends to infinity.
\end{theorem}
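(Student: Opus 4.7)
The plan is to bound the worst-case collision probability part-by-part using a spherical noise-stability inequality, then aggregate the per-hash-function bound via two applications of Jensen's inequality, both relying on concavity of a suitable function. Since the minimum is at most the average, $p_1 \leq \mathrm{E}_{(p,q): \|p-q\| = r_1} \mathrm{Pr}_h[h(p) = h(q)]$, so it suffices to control this average. Fix a single $h \in \mathcal{H}$ with parts $A_1, \ldots, A_{T'}$, $T' \leq T$, of measures $\alpha_i = \mu(A_i) \leq 1/2$. For this $h$,
\[
P_1(h) := \sum_{i=1}^{T'} \mathrm{Pr}_{\|p-q\|=r_1}[p, q \in A_i], \qquad P_2^*(h) := \sum_{i=1}^{T'} \alpha_i^2,
\]
with $p_1 \leq \mathrm{E}_h[P_1(h)]$ and $p_2^* = \mathrm{E}_h[P_2^*(h)] + o(1)$ (using concentration of the distance between uniform points on $S^{d-1}$ around $\sqrt 2$).

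For each part $A_i$, I would invoke the spherical Borell-type noise-stability inequality: among subsets of $S^{d-1}$ of given measure $\alpha_i$, the spherical cap is extremal for the probability that two $r_1$-separated uniform points both lie inside. As $d \to \infty$, this extremal value converges (via the standard sphere-to-Gauss correspondence) to the probability that two $\rho_c$-correlated standard Gaussians with $\rho_c = 1 - r_1^2/2$ both exceed $\eta_i = \Phi_c^{-1}(\alpha_i)$, which by the definition of $\Lambda$ equals $\alpha_i \Lambda(r_1, \eta_i)$. Summing,
\[
P_1(h) \leq \sum_{i=1}^{T'} \alpha_i \Lambda(r_1, \eta_i) + o(1),
\]
with additive error depending only on $T$ and $r_1$.

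Set $\Psi(\alpha) := \Lambda(r_1, \Phi_c^{-1}(\alpha))$ and consider the probability measure $\nu_h = \sum_i \alpha_i \delta_{\alpha_i}$ supported on $[0, 1/2]$ (valid because $\sum_i \alpha_i = 1$). Then $\mathrm{E}_{\nu_h}[\Psi(X)] = \sum_i \alpha_i \Psi(\alpha_i)$ and $\mathrm{E}_{\nu_h}[X] = P_2^*(h)$. Assuming $\Psi$ is concave on $[0, 1/2]$, two consecutive applications of Jensen's inequality give
\[
p_1 \leq \mathrm{E}_h[P_1(h)] + o(1) \leq \mathrm{E}_h[\Psi(P_2^*(h))] + o(1) \leq \Psi(\mathrm{E}_h[P_2^*(h)]) + o(1) = \Lambda(r_1, \eta) + o(1),
\]
the first Jensen acting over parts within each $h$ and the second over $h \in \mathcal{H}$.

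The main obstacle is establishing the concavity of $\Psi$ on $[0, 1/2]$, which is the precise place where the hypothesis $\mu(A_i) \leq 1/2$ is used. Using $d\eta/d\alpha = -1/\phi(\eta)$, with $\phi$ the standard Gaussian density, one can write $\Psi'$ and $\Psi''$ as explicit bivariate-Gaussian tail integrals and must verify that $\Psi''(\alpha) \leq 0$ whenever $\eta \geq 0$. The restriction to $[0, 1/2]$ is essential: without it, a trivial partition with a single part of measure one would force $p_1 = 1$, violating any nontrivial bound, and $\Psi$ may in fact fail to be concave near $\alpha = 1$. This sign analysis of $\Psi''$, a careful but standard bivariate-Gaussian calculus exercise, is the technical heart of the argument.
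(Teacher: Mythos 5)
Your proposal is correct and follows essentially the same route as the paper: the same extremality step (spherical caps are worst-case, via the isoperimetric inequality of Feige--Schechtman, then the sphere-to-Gauss limit), followed by the same two applications of Jensen's inequality using concavity of $\mu \mapsto \Lambda(r_1, \Phi_c^{-1}(\mu))$ on $(0,1/2)$, with the weights $\alpha_i$ and the observation $\sum_i \alpha_i^2 \leq \max_i \alpha_i \leq 1/2$ justifying the domain. The only piece you defer — verifying $\Psi'' \leq 0$ — is exactly what the paper's Lemma~\ref{conc_lemma} establishes by direct differentiation, and the paper additionally spends a short paragraph controlling the contribution of parts with $o(1)$ measure (bounded by $\eps T$), a detail you gesture at but do not spell out.
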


The idea of the proof is first to reason about one part of the partition using the isoperimetric inequality from~\cite{FS02},
and then to apply a certain averaging argument by proving concavity of a function related to $\Lambda$ using a delicate analytic argument.
For the full proof, see Appendix~\ref{app_lower}.

We note that the above requirement of all parts induced by $\mathcal{H}$ having measure at most $1/2$ is only a technicality.
We conjecture that Theorem~\ref{rho_lower} holds without this restriction. In any case, as we will see below, in the interesting range of parameters
this restriction is essentially irrelevant.

One can observe that if every hash function in $\mathcal{H}$ partitions the sphere into at most $T$ parts, then $p_2^* \geq \frac{1}{T}$
(indeed, $p_2^*$ is precisely the average sum of squares of measures of the parts). This observation, combined with
Theorem~\ref{rho_lower}, leads to the following interesting consequence.
Specifically, we can numerically estimate $\Lambda$ in order to give a lower bound on $\rho = \frac{\log(1 / p_1)}{\log(1 / p_2)}$ for any hash family $\mathcal{H}$ in which every function induces at most $T$ parts of measure at most $1/2$.
See Figure~\ref{fig_lower}, where we plot this lower bound for
$r_1 = \sqrt{2} / 2$,\footnote{The situation is qualitatively similar for other values of~$r_1$.} together with an upper bound that is given by the cross-polytope LSH\footnote{More specifically, for the ``partial'' version from Section~\ref{sec:partial_cp}, since $T$ should be constant, while $d$ grows} (for which we use numerical estimates for~(\ref{prob_exp})). We can make several conclusions from this plot. First, the cross-polytope LSH gives an almost optimal trade-off between $\rho$ and $T$. Given that the evaluation time for the
cross-polytope LSH is $O(T \log T)$ (if one uses pseudo-random rotations), we conclude that in order to improve upon the cross-polytope LSH
substantially in practice, one should design an LSH family with $\rho$ being close to optimal and evaluation time that
is \emph{sublinear in $T$}. We note that none of the known LSH families for a sphere has been shown to have this property.
This direction looks especially interesting since the convergence of $\rho$ to the optimal value (as $T$ tends to infinity) is extremely slow
(for instance, according to Figure~\ref{fig_lower}, for $r_1 = \sqrt{2} / 2$ and $r_2 \approx \sqrt{2}$ we need more than $10^{5}$ parts to achieve
$\rho \leq 0.2$, whereas the optimal $\rho$ is $1/7 \approx 0.143$).

\section{Multiprobe LSH for the cross-polytope LSH}
\label{ref:multiprobe}
We now describe our multiprobe scheme for the cross-polytope LSH, which is a method for reducing the number of independent hash tables in an LSH data structure.
Given a query point $q$, a ``standard'' LSH data structure considers only a \emph{single} cell in each of the $L$ hash tables (the cell is given by the hash value $h_i(q)$ for $i \in [L]$).
In multiprobe LSH, we consider candidates from \emph{multiple} cells in each table \cite{lv2007multi}.
The rationale is the following: points $p$ that are close to $q$ but fail to collide with $q$ under hash function $h_i$ are still likely to hash to a value that is close to $h_i(q)$.
By probing multiple hash locations close to $h_i(q)$ in the same table, multiprobe LSH achieves a given probability of success with a smaller number of hash tables than ``standard'' LSH.
Multiprobe LSH has been shown to perform well in practice \cite{lv2007multi, slaney2012optimal}.

The main ingredient in multiprobe LSH is a probing scheme for generating and ranking possible modifications of the hash value $h_i(q)$.
The probing scheme should be computationally efficient and ensure that more likely hash locations are probed first.
For a single cross-polytope hash, the order of alternative hash values is straightforward: let $x$ be the (pseudo-)randomly rotated version of query point $q$.
Recall that the ``main'' hash value is $h_i(q) = \argmax_{j \in [d]} |x_j|$.\footnote{In order to simplify notation, we consider a slightly modified version of the cross-polytope LSH that maps both the standard basis vector $+e_j$ and its opposite $-e_j$ to the same hash value.
It is easy to extend the multiprobe scheme defined here to the ``full'' cross-polytope LSH from Section \ref{sec:crossPoly}.}
Then it is easy to see that the second highest probability of collision is achieved for the hash value corresponding to the coordinate with the second largest absolute value, etc.
Therefore, we consider the indices $i \in [d]$ sorted by their absolute value as our probing sequence or ``ranking'' for a single cross-polytope.

The remaining question is how to combine multiple cross-polytope rankings when we have more than one hash function.
As in the analysis of the cross-polytope LSH (see Section \ref{sec:crossPoly}, we consider two points $q = e_1$ and $p = \alpha e_1 + \beta e_2$ at distance $R$.
Let $A^{(i)}$ be the i.i.d.\ Gaussian matrix of hash function $h_i$, and let $x^{(i)} = A^{(i)} e_1$ be the randomly rotated version of point $q$.
Given $x^{(i)}$, we are interested in the probability of $p$ hashing to a certain combination of the individual cross-polytope rankings.
More formally, let $r^{(i)}_{v_i}$ be the index of the $v_i$-th largest element of $|x^{(i)}|$, where $v \in [d]^k$ specifies the alternative probing location.
Then we would like to compute
\begin{align*}
  \underset{A^{(1)}, \ldots, A^{(k)}}{\mathrm{Pr}}&\big[ h_i(p) = r^{(i)}_{v_i} \textnormal{ for all } i \in [k] \; | \; A^{(i)} q = x^{(i)}\big] \\
  &= \; \prod_{i=1}^k \underset{A^{(i)}}{\mathrm{Pr}} \Big[ \argmax_{j \in [d]} \big| (\alpha \cdot A^{(i)} e_1 + \beta \cdot A^{(i)} e_2)_j \big| = r^{(i)}_{v_i} \; \Big| \; A^{(i)} e_1 = x^{(i)}\Big] \; .
\end{align*}
If we knew this probability for all $v \in [d]^k$, we could sort the probing locations by their probability.
We now show how to approximate this probability efficiently for a single value of $i$ (and hence drop the superscripts to simplify notation).
WLOG, we permute the rows of $A$ so that $r_v = v$ and get
\[
  \underset{A}{\mathrm{Pr}} \Big[ \argmax_{j \in [d]} \big| (\alpha x + \beta \cdot A e_2)_j \big| = v \; \Big| \; A e_1 = x \Big] \;\; = \;\; \underset{y \sim N(0, I_d)}{\mathrm{Pr}} \Big[ \argmax_{j \in [d]} \big| (x + \frac{\beta}{\alpha} \cdot y)_j \big| = v \Big] \; .
\]
The RHS is the Gaussian measure of the set $S = \{y \in \R^d \, | \, \argmax_{j \in [d]} \big| (x + \frac{\beta}{\alpha} y)_j \big| = v\}$.
Similar to the analysis of the cross-polytope LSH, we approximate the measure of $S$ by its distance to the origin.
Then the probability of probing location $v$ is proportional to $\exp(-\|y_{x,v}\|^2)$, where $y_{x,v}$ is the shortest vector $y$ such that $\argmax_{j} | x + y|_j = v$.
Note that the factor $\beta / \alpha$ becomes a proportionality constant, and hence the probing scheme does not require to know the distance $R$.
For computational performance and simplicity, we make a further approximation and use $y_{x,v} = (\max_i |x_i| - |x_v|) \cdot e_v$, i.e., we only consider modifying a single coordinate to reach the set $S$.

Once we have estimated the probabilities for each $v_i \in [d]$, we incrementally construct the probing sequence using a binary heap, similar to the approach in \cite{lv2007multi}.
For a probing sequence of length $m$, the resulting algorithm has running time $O(L \cdot d \log d + m \log m)$.
In our experiments, we found that the $O(L \cdot d \log d)$ time taken to sort the probing candidates $v_i$ dominated the running time of the hash function evaluation.
In order to circumvent this issue, we use an incremental sorting approach that only sorts the relevant parts of each cross-polytope and gives a running time of $O(L \cdot d + m \log m)$.

\section{Experiments}
\label{sec:experiments}
We now show that the cross-polytope LSH, combined with our multiprobe extension, leads to an algorithm that is also efficient in practice and improves over the hyperplane LSH on several data sets.
The focus of our experiments is the query time for an \emph{exact} nearest neighbor search.
Since hyperplane LSH has been compared to other nearest-neighbor algorithms before \cite{SSL14}, we limit our attention to the relative speed-up compared with hyperplane hashing.

We evaluate the two hashing schemes on three types of data sets.
We use a synthetic data set of randomly generated points because this allows us to vary a single problem parameter while keeping the remaining parameters constant.
We also investigate the performance of our algorithm on real data: two tf-idf data sets \cite{Lichman2013} and a set of SIFT feature vectors \cite{JDS11}.
We have chosen these data sets in order to illustrate when the cross-polytope LSH gives large improvements over the hyperplane LSH, and when the improvements are more modest.
See Appendix \ref{app:experiments} for a more detailed description of the data sets and our experimental setup (implementation details, CPU, etc.).

In all experiments, we set the algorithm parameters so that the empirical probability of successfully finding the exact nearest neighbor is at least 0.9.
Moreover, we set the number of LSH tables $L$ so that the amount of additional memory occupied by the LSH data structure is comparable to the amount of memory necessary for storing the data set.
We believe that this is the most interesting regime because significant memory overheads are often impossible for large data sets.
In order to determine the parameters that are not fixed by the above constraints,
we perform a grid search over the remaining parameter space and report the best combination of parameters.
For the cross-polytope hash, we consider ``partial'' cross-polytopes in the last of the $k$ hash functions in order to get a smooth trade-off between the various parameters (see Section \ref{sec:partial_cp}).

\paragraph{Multiprobe experiments.} 
In order to demonstrate that the multiprobe scheme is critical for making the cross-polytope LSH competitive with hyperplane hashing, we compare the performance of a ``standard'' cross-polytope LSH data structure with our multiprobe variant on an instance of the random data set ($n=2^{20}$, $d=128$).
As can be seen in Table \ref{table:multiprobe} (Appendix \ref{app:experiments}), the multiprobe variant is about $13\times$ faster in our memory-constrained setting ($L = 10$).
Note that in all of the following experiments, the speed-up of the multiprobe cross-polytope LSH compared to the multiprobe hyperplane LSH is less than $11 \times$.
Hence without our multiprobe addition, the cross-polytope LSH would be slower than the hyperplane LSH, for which a multiprobe scheme is already known \cite{lv2007multi}.

\paragraph{Experiments on random data.} 
Next, we show that the better time complexity of the cross-polytope LSH already applies for moderate values of $n$.
In particular, we compare the cross-polytope LSH, combined with fast rotations (Section \ref{sec:cppractical}) and our multiprobe scheme, to a multi-probe hyperplane LSH on random data.
We keep the dimension $d = 128$ and the distance to the nearest neighbor $R = \sqrt{2}/2$ fixed, and vary the size of the data set from $2^{20}$ to $2^{28}$.
The number of hash tables $L$ is set to $10$.
For $2^{20}$ points, the cross-polytope LSH is already $3.5 \times$ faster than the hyperplane LSH, and for $n= 2^{28}$ the speedup is $10.3\times$ (see Table \ref{table:randomdata} in Appendix \ref{app:experiments}).
Compared to a linear scan, the speed-up achieved by the cross-polytope LSH ranges from $76\times$ for $n=2^{20}$ to about $700\times$ for $n=2^{28}$.

\paragraph{Experiments on real data.} On the SIFT data set ($n = 10^6$ and $d = 128$), the cross-polytope LSH achieves a modest speed-up of $1.2\times$ compared to the hyperplane LSH (see Table \ref{table:realdata}).
On the other hand, the speed-up is is $3 - 4\times$ on the two tf-idf data sets, which is a significant improvement considering the relatively small size of the NYT data set ($n \approx 300,000$).
One important difference between the data sets is that the typical distance to the nearest neighbor is smaller in the SIFT data set, which can make the nearest neighbor problem easier (see Appendix \ref{app:experiments}).
Since the tf-idf data sets are very high-dimensional but sparse ($d \approx 100,000$), we use the feature hashing approach described in Section \ref{sec:cppractical} in order to reduce the hashing time of the cross-polytope LSH (the standard hyperplane LSH already runs in time proportional to the sparsity of a vector).
We use $512$ and $2048$ as feature hashing dimensions for NYT and pubmed, respectively.

\begin{table}
\centering
\begin{tabular}{ccccccccc}
\toprule
Data set & Method & \specialcell{Query\\time (ms)} & \specialcell{\textbf{Speed-up}\\\textbf{vs HP}} & Best $k$ & \specialcell{Number of\\candidates} & \specialcell{Hashing\\time (ms)} & \specialcell{Distances\\time (ms)} \\
\midrule
NYT & HP & 120 ms & & 19 & 57,200 & 16 & 96 \\
NYT & CP & 35 ms & \mbox{\boldmath$3.4\times$} & 2 (64) & 17,900 & 3.0 & 30 \\
\midrule
pubmed & HP & 857 ms & & 20 & 1,480,000 & 36 & 762 \\
pubmed & CP & 213 ms & \mbox{\boldmath$4.0\times$} & 2 (512) & 304,000 & 18 & 168 \\
\midrule
SIFT & HP & 3.7 ms & & 30 & 18,628 & 0.2 & 3.0 \\
SIFT & CP & 3.1 ms & \mbox{\boldmath$1.2\times$} & 6 (1) & 13,000 & 0.6 & 2.2 \\
\bottomrule
\end{tabular}
\caption{Average running times for a single nearest neighbor query with the hyperplane (HP) and cross-polytope (CP) algorithms on three real data sets.
The cross-polytope LSH is faster than the hyperplane LSH on all data sets, with significant speed-ups for the two tf-idf data sets NYT and pubmed.
For the cross-polytope LSH, the entries for $k$ include both the number of individual hash functions per table and (in parenthesis) the dimension of the last of the $k$ cross-polytopes.}
\label{table:realdata}
\end{table}

\section*{Acknowledgments}
We thank Michael Kapralov for many valuable discussions during various stages of this work.
We also thank Stefanie Jegelka and Rasmus Pagh for helpful conversations.
This work was supported in part by the NSF and the Simons
Foundation. Work done in part while the first author was at the Simons
Institute for the Theory of Computing.

\singlespace

{\small
\bibliographystyle{unsrt}
\bibliography{refs}
}

\onehalfspace

\appendix

\section{Gaussian measure of a planar set}
\label{app_gaussian}

In this Section we formalize the intuition that the standard Gaussian measure of a closed subset $A \subseteq \Rbb^2$
behaves like $e^{-\Delta_A^2 / 2}$, where $\Delta_A$ is the distance from the origin to $A$, unless $A$ is quite special.

For a closed subset $A \subseteq \Rbb^2$ and $r > 0$ denote $0 \leq \mu_A(r) \leq 1$ the normalized measure of the intersection
$A \cap r S^1$ ($A$ with the circle centered in the origin and of radius $r$):
$$
\mu_A(r) := \frac{\mu(A \cap r S^1)}{2 \pi r};
$$
here $\mu$ is the standard one-dimensional Lebesgue measure (see Figure~\ref{fig2a}). Denote $\Delta_A := \inf \{r > 0 : \mu_A(r) > 0\}$
the (essential) distance from the origin to $A$. Let $\mathcal{G}(A)$ be the standard Gaussian measure of $A$.

\begin{lemma}
  \label{measure_typical}
  Suppose that $A \subseteq \Rbb^2$ is a closed set such that $\mu_A(r)$ is non-decreasing. Then,
  $$
  \sup_{r > 0} \Bigl(\mu_A(r) \cdot e^{-r^2 / 2}\Bigr) \leq \mathcal{G}(A) \leq e^{-\Delta_A^2 / 2}.
  $$
\end{lemma}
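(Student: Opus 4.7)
The plan is to convert everything to polar coordinates, at which point both inequalities fall out almost immediately from the structure of the integral.

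The first step is to rewrite the Gaussian measure in polar coordinates. Since $\mathcal{G}(A) = \frac{1}{2\pi}\int_A e^{-(x^2+y^2)/2}\,dx\,dy$, substituting $(x,y) = (r\cos\theta, r\sin\theta)$ and using the definition of $\mu_A(r)$ (which encodes precisely the angular measure of $A \cap rS^1$, normalized by $2\pi r$) yields
\[
\mathcal{G}(A) \;=\; \int_0^\infty r\,\mu_A(r)\, e^{-r^2/2}\,dr.
\]
This identity is the only thing needed; the rest of the proof is bounding this one integral in two ways.

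For the upper bound, I would use the two facts $\mu_A(r) \leq 1$ and $\mu_A(r) = 0$ for $r < \Delta_A$ (the latter holds by the definition of $\Delta_A$ as an infimum, together with the non-decreasing assumption, which rules out positive measure below $\Delta_A$). Then
\[
\mathcal{G}(A) \;\leq\; \int_{\Delta_A}^{\infty} r\, e^{-r^2/2}\,dr \;=\; e^{-\Delta_A^2/2},
\]
which is exactly the right-hand inequality.

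For the lower bound, I would fix an arbitrary $r_0 > 0$, truncate the integral to $[r_0, \infty)$, and use monotonicity of $\mu_A$ to pull the factor $\mu_A(r_0)$ out as a lower bound on $\mu_A(r)$ for $r \geq r_0$:
\[
\mathcal{G}(A) \;\geq\; \int_{r_0}^{\infty} r\,\mu_A(r)\, e^{-r^2/2}\,dr \;\geq\; \mu_A(r_0)\int_{r_0}^{\infty} r\, e^{-r^2/2}\,dr \;=\; \mu_A(r_0)\, e^{-r_0^2/2}.
\]
Taking the supremum over $r_0 > 0$ yields the left-hand inequality.

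There is essentially no obstacle here; the only subtlety worth double-checking is the role of the word ``essential'' in the definition of $\Delta_A$ and the non-decreasing hypothesis on $\mu_A$. Monotonicity guarantees that $\{r : \mu_A(r) > 0\}$ is an interval of the form $(\Delta_A, \infty)$ (or $[\Delta_A, \infty)$), so $\mu_A$ vanishes a.e.\ on $(0, \Delta_A)$ and the upper-bound step is rigorous without any measure-theoretic care beyond that. Apart from this bookkeeping, the proof is a one-line polar-coordinate identity followed by two trivial estimates.
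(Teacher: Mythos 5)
Your proof is correct and follows essentially the same approach as the paper: write $\mathcal{G}(A)$ in polar coordinates as $\int_0^\infty r\,\mu_A(r)e^{-r^2/2}\,dr$, bound $\mu_A$ above by $1$ on $[\Delta_A,\infty)$ and by $0$ below $\Delta_A$ for the upper bound, and use monotonicity to pull $\mu_A(r_0)$ out of a truncated integral for the lower bound. Your extra remark about the ``essential'' infimum and the role of monotonicity is a correct clarification of a point the paper leaves implicit, but it does not change the argument.
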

\begin{proof}
  For the upper bound, we note that
  $$
  \mathcal{G}(A) = \int_{0}^{\infty} \mu_A(r) \cdot re^{-r^2/2} \, dr
  \leq \int_{\Delta_A}^{\infty} re^{-r^2 / 2} \, dr = e^{-\Delta_A^2 / 2}.
  $$
  For the lower bound, we similarly have, for every $r^* > 0$,
  $$
  \mathcal{G}(A) = \int_{0}^{\infty} \mu_A(r) \cdot r e^{-r^2 / 2} \, dr \geq
  \mu_A(r^*) \cdot \int_{r^*}^{\infty} re^{-r^2 / 2} \, dr = \mu_A(r^*) e^{-(r^*)^2 / 2},
  $$
  where we use that $\mu_A(r^*)$ is non-decreasing.
\end{proof}

Now we derive two corollaries of Lemma~\ref{measure_typical}.

\begin{lemma}
  \label{measure_convex}
  Let $K \subseteq \Rbb^2$ be the complement of an open convex subset of the plane that is symmetric around the origin.
  Then, for every $0 < \eps < 1/3$,
  $$
  \Omega\Bigl(\eps^{1/2} \cdot e^{-(1 + \eps) \cdot \Delta_K^2 / 2}\Bigr) \leq \mathcal{G}(K) \leq e^{-\Delta_K^2 / 2}.
  $$
\end{lemma}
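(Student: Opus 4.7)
The plan is to derive both halves of the lemma from Lemma~\ref{measure_typical} applied with $A = K$, after verifying its monotonicity hypothesis and identifying a useful boundary point of $C := \Rbb^2 \setminus K$. If $C = \emptyset$ the claim is trivial, so assume $C$ is a non-empty open convex symmetric set; then $0$ lies in the interior of $C$, and each ray from the origin meets $C$ in an open segment $\{t(\cos\theta, \sin\theta) : 0 \leq t < f(\theta)\}$, where $f$ is the continuous, positive radial function of $C$ (equivalently, the reciprocal of the gauge of $C$, which is a norm). Hence $\mu_K(r) = |\{\theta \in [0, 2\pi) : f(\theta) \leq r\}|/(2\pi)$ is non-decreasing in $r$, and the continuous function $f$ attains its minimum $\Delta_K$ at some angle $\theta_0$, giving a point $p \in \partial C \subseteq K$ with $\|p\| = \Delta_K$.

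The upper bound $\mathcal{G}(K) \leq e^{-\Delta_K^2/2}$ then follows directly from the upper estimate in Lemma~\ref{measure_typical}.

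For the lower bound I would apply the lower estimate of Lemma~\ref{measure_typical} at radius $r^\ast = \Delta_K \sqrt{1+\eps}$, which fixes the exponential factor at $e^{-(1+\eps)\Delta_K^2/2}$, and then argue that $\mu_K(r^\ast) = \Omega(\sqrt{\eps})$. To this end, take a supporting hyperplane of the convex set $C$ at $p$ with unit outward normal $n$, and set $a := \langle n, p \rangle$. Since $0 \in C$ and $C$ is open, one has $a > 0$ and $C \subseteq \{x : \langle n, x\rangle < a\}$, while Cauchy--Schwarz gives $a \leq \|p\| = \Delta_K$. Consequently the half-plane $H^+ := \{x : \langle n, x\rangle \geq a\}$ lies entirely in $K$, and by the central symmetry of $K$ so does $H^- := \{x : \langle n, x\rangle \leq -a\}$. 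The circle of radius $r^\ast$ meets $H^+$ in an arc of angular length $2\arccos(a/r^\ast) \geq 2\arccos(1/\sqrt{1+\eps})$; the identity $1 - \cos\theta = 2\sin^2(\theta/2) \leq \theta^2/2$, combined with $\sqrt{1+\eps} - 1 \geq \eps/3$ when $\eps < 1/3$, makes this lower bound $\Omega(\sqrt{\eps})$. The arcs contributed by $H^+$ and $H^-$ are disjoint because $a \leq \Delta_K < r^\ast$, so summing yields $\mu_K(r^\ast) = \Omega(\sqrt{\eps})$; plugging into Lemma~\ref{measure_typical} finishes the proof.

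The main obstacle is the geometric setup at the closest boundary point: producing the supporting hyperplane at $p$, checking $a \in (0, \Delta_K]$, and verifying that the two central-symmetric half-planes embed two disjoint arcs of total angular length $\Omega(\sqrt{\eps})$ into $K \cap r^\ast S^1$. Once this geometry is in place, the estimate on $\mu_K(r^\ast)$ is elementary trigonometry, and the restriction $\eps < 1/3$ enters only to keep the implicit constants tidy.
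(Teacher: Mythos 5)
Your proof is correct and follows essentially the same route as the paper: both halves of the lemma are derived from Lemma~\ref{measure_typical}, with the monotonicity of $\mu_K$ coming from the convexity of $C = \Rbb^2 \setminus K$, and the lower bound coming from a $\Omega(\sqrt{\eps})$ estimate on $\mu_K$ at a radius slightly larger than $\Delta_K$. The paper dispatches the geometric estimate with a remark (``easy to check, again due to convexity, see Figure~\ref{fig2b}''); you supply the actual argument via a supporting hyperplane of $C$ at the nearest boundary point, the two centrally symmetric half-planes it generates inside $K$, and the elementary arc-length computation. One small but worthwhile refinement in your version: you evaluate $\mu_K$ at $r^\ast = \Delta_K\sqrt{1+\eps}$, which makes the exponential factor exactly $e^{-(1+\eps)\Delta_K^2/2}$ as stated; the paper instead writes $\mu_K\bigl((1+\eps)\Delta_K\bigr)$, which literally yields $e^{-(1+\eps)^2\Delta_K^2/2}$ and matches the claimed bound only after reparametrizing $\eps$. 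Your choice is the cleaner one.
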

\begin{proof}
  This follows from Lemma~\ref{measure_typical}: indeed, due to the convexity of the complement of $K$, $\mu_K(r)$ is non-decreasing.
  It is easy to check that
  $$
  \mu_K\Bigl((1 + \eps) \Delta_K\Bigr) = \Omega\Bigl(\eps^{1/2}\Bigr),
  $$
  again, due to the convexity (see Figure~\ref{fig2b}).
  Thus, the required bounds follow.
\end{proof}

\begin{lemma}
  \label{measure_wedge}
  Let $K \subseteq \Rbb^2$ be an intersection of two closed half-planes such that:
  \begin{itemize}
  \item $K$ does not contain a line;
  \item the ``corner'' of $K$ is the closest point of $K$ to the origin;
  \item the angle between half-planes equals to $0 < \alpha < \pi$.
  \end{itemize}
  Then, for every $0 < \eps < 1/2$,
  $$
  \Omega_{\alpha}\Bigl(\eps \cdot e^{-(1 + \eps) \cdot \Delta_K^2}\Bigr)\leq \mathcal{G}(K) \leq e^{-\Delta_K^2 / 2}.
  $$
\end{lemma}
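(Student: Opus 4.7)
Both bounds will follow from Lemma~\ref{measure_typical}, so I first verify its hypothesis that $\mu_K(r)$ is non-decreasing. Writing $K = H_1 \cap H_2$, for each half-plane $H_i$ the angular set $\{\phi \in [0, 2\pi) : r e^{i\phi} \in H_i\}$ is monotone non-decreasing in $r$: it is an expanding arc once $r$ exceeds the distance from the origin to $\partial H_i$, a fixed semicircle if the origin lies on $\partial H_i$, or the whole circle if the origin lies in the interior of $H_i$. Intersecting the two monotone families shows that $\mu_K$ is non-decreasing, and Lemma~\ref{measure_typical} immediately yields the upper bound $\mathcal{G}(K) \leq e^{-\Delta_K^2/2}$.

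For the lower bound, I rotate around the origin to place the corner at $P = (\Delta, 0)$ with $\Delta := \Delta_K$. The tangent cone to $K$ at $P$ is a planar wedge of opening angle $\alpha$; since $P$ is the closest point of $K$ to the origin, every direction in this cone has non-negative first coordinate, so its angular range $[\theta_-, \theta_+]$ (measured from the positive $x$-axis) satisfies $-\pi/2 \leq \theta_- < \theta_+ \leq \pi/2$ and $\theta_+ - \theta_- = \alpha$. I set $r^* = \sqrt{1 + \eps}\cdot\Delta$ so that $(r^*)^2/2 = (1+\eps)\Delta^2/2$, and aim to prove $\mu_K(r^*) = \Omega_\alpha(\eps)$.

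The main step is a local linearization near $P$. For $Q_\phi := (r^*\cos\phi, r^*\sin\phi)$, Taylor expansion gives $Q_\phi - P = \Delta \cdot (\eps/2,\, \phi) + O\bigl(\Delta(\eps^2 + \phi^2)\bigr)$. For $|\phi|$ of order $\eps$, the remainder is $O(\Delta \eps^2)$, which is subleading, so the direction from $P$ to $Q_\phi$ is $\arctan(2\phi/\eps) + o(1)$. Membership $Q_\phi \in K$ is then equivalent (for $\eps$ sufficiently small depending on $\alpha$) to this angle lying in $[\theta_-, \theta_+]$, giving an interval of admissible $\phi$ of length at least $(\eps/2)(\tan\theta_+ - \tan\theta_-) \geq (\eps/2)\cdot\alpha$ by the mean value theorem, since $\sec^2 \geq 1$ uniformly on $[-\pi/2, \pi/2]$. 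Hence $\mu_K(r^*) \geq c_\alpha \eps$, and Lemma~\ref{measure_typical} applied at $r^*$ yields $\mathcal{G}(K) \geq \mu_K(r^*) \cdot e^{-(r^*)^2/2} = \Omega_\alpha\bigl(\eps \cdot e^{-(1+\eps)\Delta_K^2/2}\bigr)$, which implies the stated lower bound. (The exponent $-(1+\eps)\Delta_K^2$ in the lemma as stated appears to be a typo for $-(1+\eps)\Delta_K^2/2$, in analogy with Lemma~\ref{measure_convex}; the literal statement is the weaker of the two and follows a fortiori.)

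The main obstacle will be controlling the linearization error so that the $O(\eps^2 + \phi^2)$ remainder does not swallow the leading $(\eps/2)\alpha$ term; this forces $\eps$ to be small in a way depending on $\alpha$, which is exactly what the $\Omega_\alpha$ notation permits. A secondary subtlety is that the wedge may be oriented asymmetrically with respect to $\vec{OP}$ and one of $|\theta_\pm|$ may approach $\pi/2$ as the orientation degenerates, but the uniform bound $\tan\theta_+ - \tan\theta_- \geq \alpha$ on admissible orientations handles every case at once as long as $P$ is strictly the closest point of $K$.
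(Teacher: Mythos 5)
Your overall approach is the same as the paper's: verify the hypothesis of Lemma~\ref{measure_typical} and then evaluate $\mu_K$ at a radius slightly larger than $\Delta_K$. The upper bound, the choice of $r^*$, and the observation that the stated exponent $-(1+\eps)\Delta_K^2$ (rather than $-(1+\eps)\Delta_K^2/2$) is weaker than what the argument actually delivers, are all fine.

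However, your verification that $\mu_K(r)$ is non-decreasing has a genuine error. You claim that for each half-plane $H_i$ the angular set $\{\phi : re^{i\phi}\in H_i\}$ is non-decreasing in $r$, and in particular that it is ``the whole circle if the origin lies in the interior of $H_i$.'' That is false: if the origin lies strictly inside $H_i$, the angular set is the whole circle only for $r$ smaller than the distance from the origin to $\partial H_i$, and for larger $r$ it \emph{shrinks} monotonically toward a semicircle. So the nested-intersection argument does not apply. This case is not excluded by the lemma's hypotheses: take $H_1=\{y\leq x+1\}$ and $H_2=\{y\geq 1\}$, whose corner $(0,1)$ is the unique closest point of $K=H_1\cap H_2$ to the origin, yet the origin lies strictly inside $H_1$, and for $r\geq\Delta_K=1$ the set $\mu_{H_1}(r)$ is already decreasing. (One can check directly that $\mu_K(r)$ is still non-decreasing in this example, so the lemma holds, but your argument does not prove it; the paper attributes monotonicity to the ``corner is closest'' condition via a genuinely two-half-plane argument, not to monotonicity of each $H_i$ separately.) Note that in the only place the lemma is actually invoked, Lemma~\ref{est2}, the relevant wedge is $\{x\geq t,\ \alpha x+\beta y\geq t\}$ with $t>0$, so both half-planes avoid the origin and your argument does go through there.

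A smaller but still real issue is the treatment of the linearization when one of $\theta_\pm$ is close to $\pm\pi/2$. You acknowledge it but then dismiss it via ``$\tan\theta_+-\tan\theta_-\geq\alpha$,'' which does not resolve the problem: when $\theta_+\to\pi/2$, the nominal interval $\bigl[\tfrac{\eps}{2}\tan\theta_-,\tfrac{\eps}{2}\tan\theta_+\bigr]$ becomes unbounded, forcing $\phi\sim\sqrt{\eps}$ and invalidating the premise that $\phi=O(\eps)$. The fix is to deliberately shrink to the sub-interval of directions $[\theta_-+\alpha/4,\ \theta_+-\alpha/4]$, which has length $\alpha/2$, is bounded away from $\pm\pi/2$ by $\alpha/4$ so that $|\tan(\cdot)|\leq\cot(\alpha/4)=O_\alpha(1)$, and still satisfies $\tan(\theta_+-\alpha/4)-\tan(\theta_-+\alpha/4)\geq\alpha/2$ by the mean value theorem. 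Then all admissible $\phi$ are $O_\alpha(\eps)$, the linearization error is $o(\alpha)$ for $\eps$ small enough depending on $\alpha$, and the arc length bound $\Omega_\alpha(\eps)$ follows. Also, the requirement that $P$ be ``strictly'' the closest point is not in the lemma's hypotheses; tangent edges ($\theta_\pm=\pm\pi/2$) are allowed and handled by the same sub-interval trick.
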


\begin{proof}
  This, again, follows from Lemma~\ref{measure_typical}. The second condition implies that $\mu_K(r)$ is non-decreasing,
  and an easy computation shows that
  $$
  \mu_K((1 + \eps) \Delta_K) \geq \Omega_{\alpha}(\eps)
  $$
  (see Figure~\ref{fig2c}).
\end{proof}

\begin{figure}[h]
  \caption{}
  \centering
  \begin{subfigure}{0.45\textwidth}
    \includegraphics[width=\textwidth]{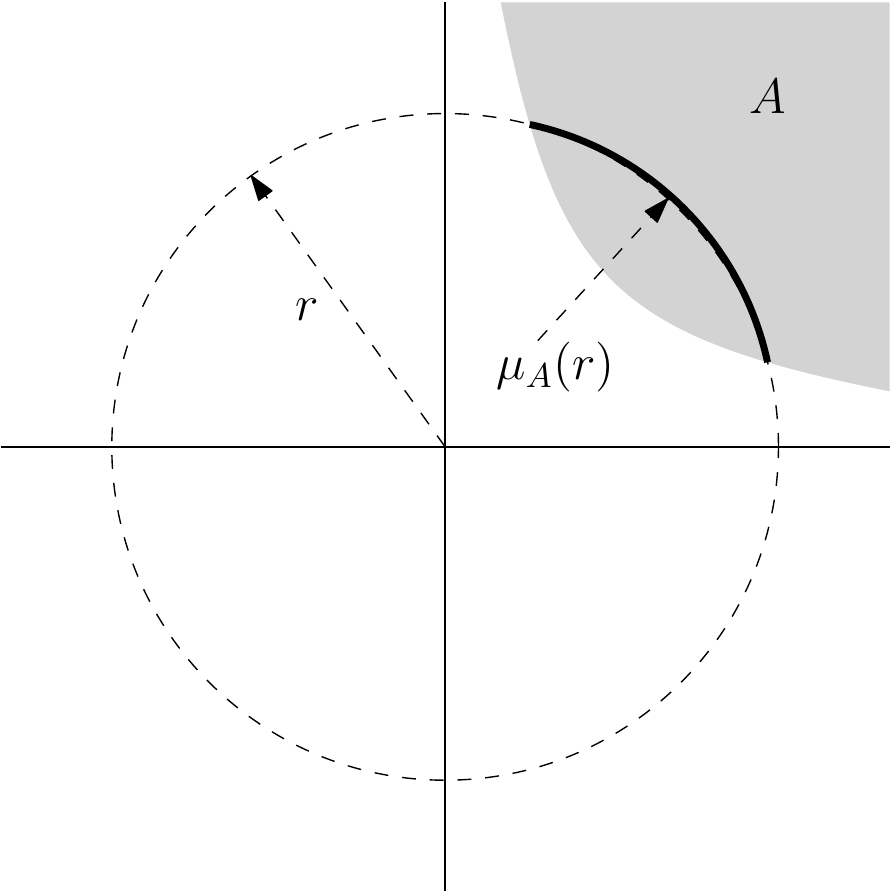}
    \caption{Defintion of $\mu_A(r)$}
    \label{fig2a}
  \end{subfigure}
  \hfill
  \begin{subfigure}{0.45\textwidth}
    \includegraphics[width=\textwidth]{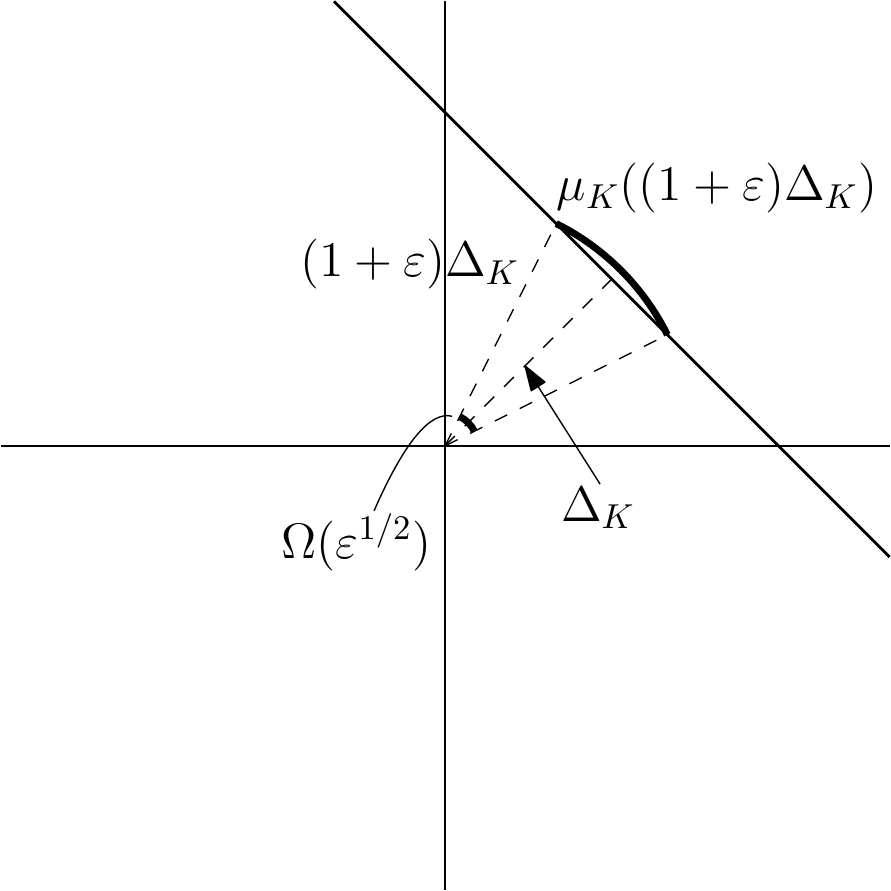}
    \caption{For Lemma~\ref{measure_convex}}
    \label{fig2b}
  \end{subfigure}
  
  \begin{subfigure}{0.45\textwidth}
    \includegraphics[width=\textwidth]{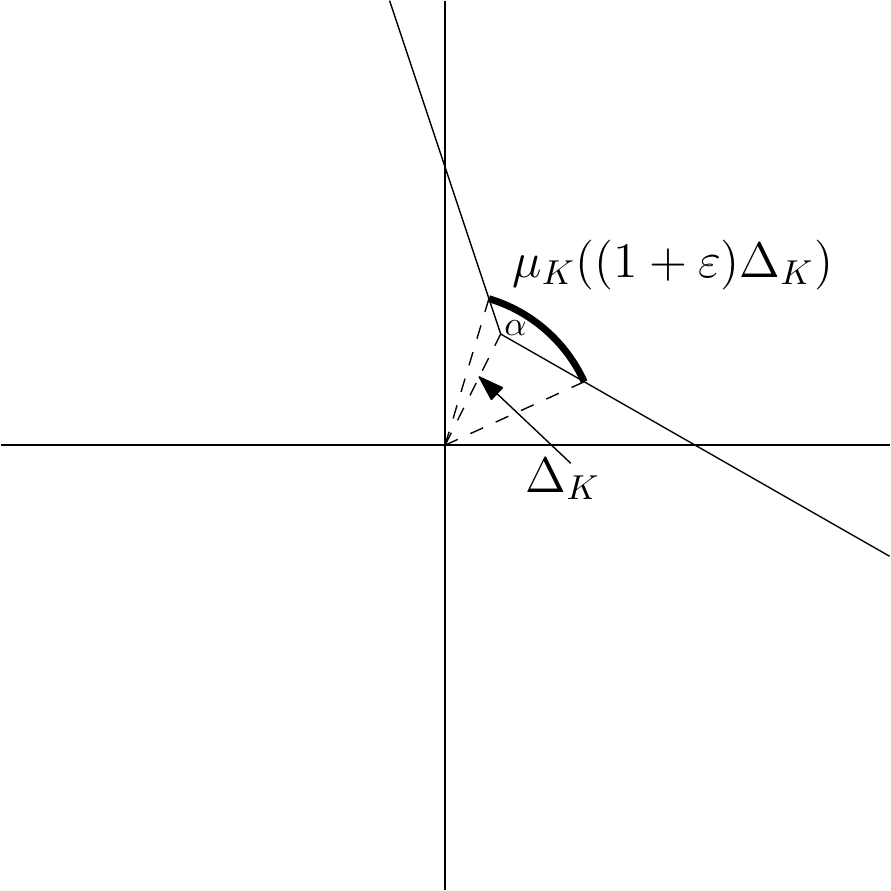}
    \caption{For Lemma~\ref{measure_wedge}}
    \label{fig2c}
  \end{subfigure}
\end{figure}

\section{Proof of Theorem~\ref{rho_upper}}
\label{app_upper}

In this section we complete the proof of Theorem~\ref{rho_upper},
following the outline from Section~\ref{sec:crossPoly}. Our starting
point is the collision probability bound from Eqn.~\eqref{prob_exp}.

For $u, v \in \Rbb$ with $u \geq 0$ and $\alpha u + \beta v \geq 0$ define,
$$
\sigma(u, v) = \underset{X_2, Y_2 \sim N(0, 1)}{\mathrm{Pr}}[|X_2| \leq u \mbox{ and } |\alpha X_2 + \beta Y_2| \leq \alpha u + \beta v].
$$
Then, the right-hand side of~(\ref{prob_exp}) is equal to
$$
2d \cdot \underset{X_1, Y_1 \sim N(0, 1)}{\mathrm{E}}[\sigma(X_1, Y_1)^{d-1}].
$$
Let us define
$$
\Delta(u, v) = \min\{u, \alpha u + \beta v\}.
$$

\begin{lemma}
  \label{est1}
  For every $0 < \eps < 1/3$,
  $$
  1 - e^{-\Delta(u, v)^2 / 2} \leq \sigma(u, v) \leq 1 - \Omega\left(\eps^{1/2} \cdot e^{-(1 +\eps)\Delta(u, v)^2 / 2 }\right).
  $$
\end{lemma}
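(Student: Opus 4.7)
The plan is to reduce the claim directly to Lemma~\ref{measure_convex} by reinterpreting $1 - \sigma(u,v)$ as the standard Gaussian measure on $\Rbb^2$ of the complement of a convex, origin-symmetric planar set.

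First I would rewrite
\[
\sigma(u,v) \;=\; \mathcal{G}(S_{u,v}), \qquad S_{u,v} \;:=\; \bigl\{(x,y)\in\Rbb^2 : |x|\le u \text{ and } |\alpha x + \beta y| \le \alpha u + \beta v\bigr\},
\]
where $\mathcal{G}$ denotes the standard two-dimensional Gaussian measure (with $(x,y)$ playing the role of $(X_2,Y_2)$). Consequently $1-\sigma(u,v) = \mathcal{G}(K)$ for $K := \Rbb^2 \setminus \operatorname{int}(S_{u,v})$. The set $S_{u,v}$ is the intersection of two closed strips, each of which is convex and symmetric about the origin; hence $S_{u,v}$ itself is convex and origin-symmetric, and $K$ is precisely the closed complement of an open convex subset of the plane symmetric around the origin, matching the hypothesis of Lemma~\ref{measure_convex}.

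Next I would compute $\Delta_K$, the distance from the origin to $K$. Its boundary lies on the four lines $x = \pm u$ and $\alpha x + \beta y = \pm(\alpha u + \beta v)$. Since $\alpha^2 + \beta^2 = 1$, the distances from the origin to these four lines are $u$, $u$, $\alpha u + \beta v$ and $\alpha u + \beta v$ respectively, so
\[
\Delta_K \;=\; \min\{u,\;\alpha u + \beta v\} \;=\; \Delta(u,v),
\]
using the assumptions $u \ge 0$ and $\alpha u + \beta v \ge 0$ that ensure the origin lies inside $S_{u,v}$ and these are indeed positive distances.

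Finally I would simply invoke Lemma~\ref{measure_convex}: for every $0 < \eps < 1/3$,
\[
\Omega\!\left(\eps^{1/2}\,e^{-(1+\eps)\Delta(u,v)^2/2}\right) \;\le\; \mathcal{G}(K) \;=\; 1-\sigma(u,v) \;\le\; e^{-\Delta(u,v)^2/2},
\]
and rearranging gives exactly the stated two-sided bound on $\sigma(u,v)$. There is essentially no hard step here; the only things to check are the trivial convexity/symmetry of the intersection of two strips and the elementary distance computation using $\alpha^2+\beta^2=1$. The real analytic content (the factor $\eps^{1/2}$ coming from the arc length of a circle near a tangent chord) has already been absorbed into Lemma~\ref{measure_convex} via Lemma~\ref{measure_typical}, so no additional work on Gaussian tails is required.
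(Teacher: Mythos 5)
Your proof is correct and matches the paper's argument exactly: both reduce the claim to Lemma~\ref{measure_convex} by observing that $1-\sigma(u,v)$ is the Gaussian measure of the complement of the origin-symmetric convex strip intersection, whose distance to the origin is $\Delta(u,v)$ (using $\alpha^2+\beta^2=1$). You simply spell out the convexity/symmetry check and the line-distance computation that the paper calls ``obvious.''
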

\begin{proof}
  This is a combination of Lemma~\ref{measure_convex} together with the following obvious observation: the distance from the origin to the
  set $\{(x, y) : |x| \geq u \mbox{ or } |\alpha x + \beta y| \geq \alpha u + \beta v\}$ is equal to $\Delta(u, v)$ (see Figure~\ref{gaussian_set}).
\end{proof}

\begin{lemma}
  \label{est2}
  For every $t \geq 0$ and $0 < \eps < 1/3$,
  $$
  \Omega_{\tau}\left(\eps \cdot e^{-(1 + \eps) \cdot \frac{4}{4 - \tau^2} \cdot \frac{t^2}{2}}\right) \leq \underset{X_1, Y_1 \sim N(0, 1)}{\mathrm{Pr}}[\Delta(X_1, Y_1) \geq t] \leq e^{-\frac{4}{4 - \tau^2} \cdot \frac{t^2}{2}}.
  $$
\end{lemma}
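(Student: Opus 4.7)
The plan is to interpret the event $\{\Delta(X_1,Y_1) \geq t\}$ geometrically as the Gaussian measure of a planar wedge, compute the distance from the origin to this wedge in closed form in terms of $\tau$, and then invoke Lemma~\ref{measure_wedge} directly.

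First I would unfold the definition of $\Delta$: the event $\Delta(X_1,Y_1) \geq t$ is exactly the event that $(X_1,Y_1)$ lies in the set
\[
K = \{(x,y) \in \Rbb^2 : x \geq t \text{ and } \alpha x + \beta y \geq t\},
\]
which is the intersection of two closed half-planes, hence a wedge. The probability in question is therefore the standard Gaussian measure $\mathcal{G}(K)$.

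Next I would compute $\Delta_K$. Using the relations $\alpha^2+\beta^2=1$ and $(\alpha-1)^2+\beta^2=\tau^2$, one gets $\alpha = 1-\tau^2/2$ and $\beta^2 = \tau^2(4-\tau^2)/4$. The two boundary lines of $K$ are $x=t$ and $\alpha x+\beta y=t$; both pass at distance $t$ from the origin. Since $\alpha<1$, the orthogonal projection of the origin onto either of these lines fails to lie in $K$, so the closest point of $K$ to the origin is the ``corner'' $(t,\, t(1-\alpha)/\beta)$. Its distance to the origin is
\[
\Delta_K \;=\; t\sqrt{1 + \tfrac{(1-\alpha)^2}{\beta^2}} \;=\; t \cdot \frac{\tau}{\beta} \;=\; \frac{2t}{\sqrt{4-\tau^2}},
\]
so that $\Delta_K^2/2 = \frac{4}{4-\tau^2}\cdot \frac{t^2}{2}$, which is exactly the exponent appearing in the claimed bounds.

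Finally I would verify the hypotheses of Lemma~\ref{measure_wedge}: $K$ is an intersection of two closed half-planes; since its opening angle is $\pi-\arccos(\alpha) < \pi$, it contains no line; and the corner is the closest point to the origin by the preceding paragraph. The opening angle depends only on $\tau$ (equivalently on $\alpha$) and lies in $(0,\pi)$, so the implicit constant in the $\Omega_\alpha(\cdot)$ of Lemma~\ref{measure_wedge} becomes an $\Omega_\tau(\cdot)$. Applying that lemma then immediately gives both inequalities with the exponent $\Delta_K^2/2 = \tfrac{4}{4-\tau^2}\cdot \tfrac{t^2}{2}$.

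The proof is essentially a geometric calculation plus a direct appeal to Lemma~\ref{measure_wedge}; the only mildly delicate point is ensuring that the corner really is the nearest point of $K$ to the origin (which is why the second bullet of Lemma~\ref{measure_wedge} is needed), but this is automatic from $\alpha<1$. No serious obstacle is expected.
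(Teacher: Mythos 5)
Your proposal is correct and takes essentially the same approach as the paper: the paper's proof is a one-line appeal to Lemma~\ref{measure_wedge} together with the observation that the squared distance from the origin to the wedge $\{(x,y): x \geq t,\ \alpha x + \beta y \geq t\}$ equals $\frac{4}{4-\tau^2}t^2$, which is exactly what you compute. Your write-up additionally verifies the hypotheses of Lemma~\ref{measure_wedge} (corner is the nearest point, opening angle in $(0,\pi)$, dependence of the implicit constant only on $\tau$), which the paper leaves implicit.
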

\begin{proof}
  Similar to the previous lemma, this is a consequence of Lemma~\ref{measure_wedge} together with the fact that the squared distance from the origin to the set
  $\{(x, y) \colon x \geq t \mbox{ and } \alpha x + \beta y \geq t\}$ is equal to $\frac{4}{4 - \tau^2} \cdot t^2$.
\end{proof}

\subsection{Idealized proof}

Let us expand Eqn.~\eqref{prob_exp} further, assuming that the
``idealized'' versions of Lemma~\ref{est1} and Lemma~\ref{est2}
hold. Namely, we assume that
\begin{equation}
\label{ideal_est1}
\sigma(u, v) = 1 - e^{-\Delta(u, v)^2 / 2};
\end{equation}
and
\begin{equation}
\label{ideal_est2}
\underset{X_1, Y_1 \sim N(0, 1)}{\mathrm{Pr}}[\Delta(X_1, Y_1) \geq t] = e^{-\frac{4}{4 - \tau^2} \cdot \frac{t^2}{2}}.
\end{equation}

In the next section we redo the computations using the precise bounds for
$\sigma(u, v)$ and ${\mathrm{Pr}}[\Delta(X_1, Y_1) \geq t]$.

Expanding Eqn.~\eqref{prob_exp}, we have
\begin{align}
  \underset{X_1, Y_1 \sim N(0, 1)}{\mathrm{E}}[\sigma(X_1, Y_1)^{d-1}] \nonumber & =
  \int_0^1 \underset{X_1, Y_1 \sim N(0, 1)}{\mathrm{Pr}}[\sigma(X_1, Y_1) \geq t^{\frac{1}{d-1}}] \, dt \\
  \nonumber & = 
  \int_0^1 \underset{X_1, Y_1 \sim N(0, 1)}{\mathrm{Pr}}[e^{-\Delta(X_1, Y_1)^2 / 2} \leq 1 - t^{\frac{1}{d-1}}] \, dt \\
  \nonumber & = 
  \int_0^1 (1 - t^{\frac{1}{d - 1}})^{\frac{4}{4 - \tau^2}} \, dt \\
  \nonumber & =
  (d - 1) \cdot \int_0^1 (1 - u)^{\frac{4}{4 - \tau^2}} u^{d-2} \, dt \\
  \nonumber & = (d - 1) \cdot B \left(\frac{8 - \tau^2}{4 - \tau^2}; d - 1\right) \\
  & = \Theta_{\tau}(1) \cdot d^{- \frac{4}{4 - \tau^2}}, \label{ideal_derivation}
\end{align}
where:
\begin{itemize}
\item the first step is a standard expansion of an expectation;
\item the second step is due to~(\ref{ideal_est1});
\item the third step is due to~(\ref{ideal_est2});
\item the fourth step is a change of variables;
\item the fifth step is a definition of the Beta function;
\item the sixth step is due to the Stirling approximation.
\end{itemize}
Overall, substituting~(\ref{ideal_derivation}) into~(\ref{prob_exp}), we get:
$$
\ln \frac{1}{\underset{h \sim \mathcal{H}}{\mathrm{Pr}}[h(p) = h(q)]} = \frac{\tau^2}{4 - \tau^2} \cdot \ln d \pm O_{\tau}(1).
$$

\subsection{The real proof}

We now perform the exact calculations, using the bounds (involving
$\eps$) from Lemma~\ref{est1} and Lemma~\ref{est2}. We set $\eps = 1 /
d$ and obtain the following asymptotic statements:
$$
\sigma(u, v) = 1 - d^{\pm O(1)} \cdot e^{-(1 \pm d^{-\Omega(1)}) \cdot \Delta(u, v)^2 / 2};
$$
and
$$
\underset{X, Y \sim N(0, 1)}{\mathrm{Pr}}[\Delta(X, Y) \geq t] = d^{\pm O(1)} \cdot e^{-(1 \pm d^{-\Omega(1)}) \cdot \frac{4}{4 - \tau^2} \cdot \frac{t^2}{2}}.
$$
Then, we can repeat the ``idealized'' proof (see Eqn.~\eqref{ideal_derivation}) verbatim with the new estimates and obtain the final form of Theorem~\ref{rho_upper}:
$$
\ln \frac{1}{\underset{h \sim \mathcal{H}}{\mathrm{Pr}}[h(p) = h(q)]} = \frac{\tau^2}{4 - \tau^2} \cdot \ln d \pm O_{\tau}(\ln \ln d).
$$
Note the difference in the low order term between idealized and the real version. As we argue in Section~\ref{sec_lower}, the latter
$O_{\tau}(\ln \ln d)$ is, in fact, tight.

\section{Proof of Theorem~\ref{rho_lower}}
\label{app_lower}

\begin{lemma}
  \label{iso_lemma}
  Let $A \subset S^{d-1}$ be a measurable subset of a sphere with $\mu(A) = \mu_0 \leq 1/2$.
  Then, for $0 < \tau < \sqrt{2}$, one has
  \begin{equation}
    \label{iso_statement}
  \underset{u, v \sim S^{d-1}}{\mathrm{Pr}}\bigl[v \in A \bigm| u \in A, \|u - v\| \leq \tau\bigr] =
  \frac{\underset{X, Y \sim N(0, 1)}{\mathrm{Pr}}[X \geq \eta \mbox{ and } \alpha X + \beta Y \geq \eta] + o(1)}{\underset{X \sim N(0, 1)}{\mathrm{Pr}}[X \geq \eta] + o(1)},
  \end{equation}
  where:
  \begin{itemize}
  \item $\alpha = 1 - \frac{\tau^2}{2}$;
  \item $\beta = \sqrt{\tau^2 - \frac{\tau^4}{4}}$;
  \item $\eta \in \Rbb$ is such that $\underset{X \sim N(0, 1)}{\mathrm{Pr}}[X \geq \eta] = \mu_0$.
  \end{itemize}
  In particular, if $\mu_0 = \Omega(1)$, then
  $$
  \underset{u, v \sim S^{d-1}}{\mathrm{Pr}}\bigl[v \in A \bigm| u \in A, \|u - v\| \leq \tau\bigr]
  = \Lambda(\tau, \Phi_c^{-1}(\mu_0)) + o(1).
  $$
\end{lemma}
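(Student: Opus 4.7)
The plan is to reduce from an arbitrary measurable $A$ to a spherical cap via isoperimetric rearrangement, and then evaluate the cap case via a Gaussian limit. The first step is to invoke the FS02 isoperimetric/rearrangement inequality in the following form: for any measurable $A \subseteq S^{d-1}$ with $\mu(A) = \mu_0$, the pair functional $\Pr_{u,v \sim S^{d-1}}[u \in A,\, v \in A,\, \|u-v\| \le \tau]$ is extremized by a spherical cap $C$ of measure $\mu_0$. This can be obtained either by polarization (two-point symmetrization) applied to the indicator pair kernel $\mathbf{1}[\|u-v\| \le \tau]$, or by integrating the standard FS02 inequality across all distance levels $r \le \tau$ and using the nestedness of caps. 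Since the denominator $\Pr[u \in A,\, \|u-v\| \le \tau]$ equals $\mu_0 \cdot \Pr[\|u-v\| \le \tau]$ and depends only on $\mu_0$, this step identifies the conditional probability in question with its cap counterpart (up to the rearrangement error, which is zero for the extremal cap).

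Next, I would parametrize $C = \{x \in S^{d-1} : x_1 \ge t_d\}$ with $t_d$ chosen so $\mu(C) = \mu_0$, and use the classical fact that $\sqrt{d}\, x_1 \Rightarrow N(0,1)$ for $x \sim S^{d-1}$, which yields $\sqrt{d}\, t_d \to \eta$ with $\Phi_c(\eta) = \mu_0$. For the joint event $\{u, v \in C\}$ conditioned on $\|u - v\| = \tau$, I would decompose $v = \alpha u + \beta w$ with $\langle u, w\rangle = 0$ and $\|w\|=1$; the distance constraint forces exactly $\alpha = 1 - \tau^2/2$ and $\beta = \sqrt{\tau^2 - \tau^4/4}$, so $v_1 = \alpha u_1 + \beta w_1$. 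Under the same $\sqrt{d}$-rescaling, $(\sqrt{d}\, u_1,\, \sqrt{d}\, w_1) \Rightarrow (X, Y)$ with $X, Y$ independent standard Gaussians. Consequently, the numerator equals $\Pr[X \ge \eta,\, \alpha X + \beta Y \ge \eta] + o(1)$, and dividing by $\mu_0 = \Phi_c(\eta) + o(1)$ produces the ratio in the statement. The ``in particular'' clause is then immediate: when $\mu_0 = \Omega(1)$, $\eta$ is bounded, so the denominator is $\Theta(1)$ and the two $o(1)$ errors merge into a single $o(1)$, giving $\Lambda(\tau, \Phi_c^{-1}(\mu_0)) + o(1)$.

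A secondary technical step is the passage from conditioning on $\|u - v\| = \tau$ to conditioning on $\|u - v\| \le \tau$. The density of $\|u-v\|$ under the uniform measure on $S^{d-1}$ grows like a large power of $d$ as the radius increases, so the conditional law of $\|u-v\|$ given $\|u-v\| \le \tau$ concentrates at its upper endpoint $\tau$; substituting this concentrated radius into the numerator incurs only an $o(1)$ error that is absorbed into the stated bound.

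The main obstacle will be Step 1. The FS02 result is typically phrased either as a boundary-measure inequality or as a two-set noise-stability inequality with spherical caps as extremizers, and converting it into monotonicity of the specific pair functional $\Pr[u, v \in A,\, \|u-v\| \le \tau]$ under cap rearrangement takes a short but non-routine argument (polarization, or a level-set reduction using the nestedness of caps across all radii $r \le \tau$ simultaneously). Once this cap-reduction is secured, Steps 2--3 become standard high-dimensional Gaussian computations, and the overall $o(1)$ control is dimension-free in $\tau$ on compact subintervals of $(0, \sqrt{2})$.
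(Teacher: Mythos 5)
Your proposal follows essentially the same route as the paper's proof: reduction to a spherical cap via the isoperimetric result of \cite{FS02}, concentration of the conditional distance at the endpoint $\tau$, and the Gaussian limit of low-dimensional projections of the uniform measure on the sphere. The only difference is cosmetic: the paper invokes Theorem~5 of \cite{FS02} directly for the cap extremality of the conditional probability, so the polarization/level-set argument you flag as the main obstacle is not actually needed.
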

\begin{proof}
  First, the left-hand side of~(\ref{iso_statement})
  is maximized by a spherical cap of measure $\mu_0$. This follows from Theorem~5 of~\cite{FS02}.
  So, from now on we assume that $A$ is a spherical cap.

  Second, one has
  \begin{align*}
  & \underset{u, v \sim S^{d-1}}{\mathrm{Pr}}\bigl[v \in A \bigm| u \in A, \|u - v\| \leq \tau\bigr] \\
  & = \underset{u, v \sim S^{d-1}}{\mathrm{Pr}}\bigl[v \in A \bigm| u \in A, \|u - v\| = \tau \pm o(1)\bigr] + o(1) \\
    & = \frac{\underset{u \sim S^{d-1}}{\mathrm{Pr}}\bigl[u_1 \geq \widetilde{\eta} \mbox{ and }
        (\alpha \pm o(1)) u_1 + (\beta \pm o(1)) u_2 \geq \widetilde{\eta}
        \bigr]}{\underset{u \sim S^{d-1}}{\mathrm{Pr}}[u_1 \geq \widetilde{\eta}]} + o(1)\\
  & = \frac{\underset{X, Y \sim N(0, 1)}{\mathrm{Pr}}[X \geq \eta \mbox{ and } \alpha X + \beta Y \geq \eta] + o(1)}{\underset{X \sim N(0, 1)}{\mathrm{Pr}}[X \geq \eta] + o(1)},
  \end{align*}
  where $\widetilde{\eta}$ is such that $\underset{u \sim S^{d-1}}{\mathrm{Pr}}[u_1 \geq \widetilde{\eta}] = \mu_0$
  and:
  \begin{itemize}
  \item the first step is due to the concentration of measure on the sphere;
  \item the second step is expansion of the conditional probability;
  \item the third step is due to the fact that a $O(1)$-dimensional
    projection of the uniform measure on a sphere of radius $\sqrt{d}$ in $\Rbb^d$
    converges in total variation to a standard Gaussian measure~\cite{DF87}.
  \end{itemize}
\end{proof}

\begin{lemma}
  \label{conc_lemma}
  For every $0 < \tau < \sqrt{2}$, the function $\mu \mapsto \Lambda(\tau, \Phi_c^{-1}(\mu))$ is concave for $0 < \mu < 1/2$.
\end{lemma}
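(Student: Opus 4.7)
The plan is to translate concavity in $\mu$ into a pointwise second-order inequality in $\eta = \Phi_c^{-1}(\mu)$, derive clean closed-form expressions for the relevant derivatives of $\Lambda(\tau, \eta)$, and verify the resulting single-variable inequality directly. Since $d\mu/d\eta = -\phi(\eta)$ and $d^2\eta/d\mu^2 = \eta/\phi(\eta)^2$, the chain rule gives, for any smooth function $G$ viewed as $F(\mu) := G(\Phi_c^{-1}(\mu))$,
\[
F''(\mu) \;=\; \frac{G''(\eta) + \eta\, G'(\eta)}{\phi(\eta)^2}.
\]
Applied to $G(\eta) := \Lambda(\tau, \eta)$, concavity of $F$ on $(0, 1/2)$ is therefore equivalent to $\Lambda_{\eta\eta}(\tau, \eta) + \eta\,\Lambda_\eta(\tau, \eta) \le 0$ for every $\eta > 0$.

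To exploit this, I would write $Z = \alpha X + \beta Y$ so that $(X, Z)$ is standard bivariate Gaussian with correlation $\alpha = 1 - \tau^2/2$, and set $h(\eta) := \Pr[X \ge \eta,\, Z \ge \eta]$, so that $\Lambda(\tau, \eta) = h(\eta)/\Phi_c(\eta)$. The Gaussian factorization $\phi((\eta - \alpha x)/\beta)\,\phi(x) = \phi(\eta)\,\phi((x - \alpha\eta)/\beta)$ collapses the derivatives of $h$ to the surprisingly clean identities
\[
h'(\eta) \;=\; -2\phi(\eta)\Phi_c(\gamma\eta), \qquad h''(\eta) + \eta\, h'(\eta) \;=\; 2\gamma\,\phi(\eta)\phi(\gamma\eta),
\]
where $\gamma := (1-\alpha)/\beta = \sqrt{(1-\alpha)/(1+\alpha)}$. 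Substituting these into $\Lambda = h/\Phi_c(\eta)$ and using the chain-rule formula above, the desired inequality $\Lambda_{\eta\eta} + \eta\,\Lambda_\eta \le 0$ simplifies, after multiplying through by $\Phi_c(\eta)^3/(2\phi(\eta)^2) > 0$, to the explicit pointwise inequality
\[
h(\eta) \;\le\; 2\,\Phi_c(\eta)\Phi_c(\gamma\eta) \;-\; \gamma\,\phi(\gamma\eta)\,\frac{\Phi_c(\eta)^2}{\phi(\eta)}, \qquad \eta > 0,\ \alpha \in (0, 1).
\]

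The degenerate cases $\alpha = 0$ ($\gamma = 1$, $h = \Phi_c^2$) and $\alpha = 1$ ($\gamma = 0$, $h = \Phi_c$) both yield equality, and at $\eta = 0$ the inequality reduces, via $\alpha = \cos\theta$ and $\gamma = \tan(\theta/2)$, to the elementary bound $\tan(\theta/2) \le 2\theta/\pi$ on $[0, \pi/2]$, which follows from convexity of $\tan(\cdot/2)$ together with equality at the two endpoints. For general $\eta > 0$, I would let $D(\eta)$ denote the difference between the right-hand and left-hand sides, compute $D'(\eta)$ using the formula for $h'(\eta)$ together with the Mill's-ratio identity $M'(\eta) = M(\eta)(M(\eta) - \eta)$ for $M := \phi/\Phi_c$, and prove a monotonicity statement that forces $D(\eta) \ge 0$ on $(0, \infty)$. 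This exploits the specific relation $\lambda^2 - \gamma^2 = 1$, with $\lambda := 1/\cos(\theta/2)$, coming from the rotated representation $h(\eta) = 2\int_0^\infty \Phi_c(\lambda\eta + \gamma v)\,\phi(v)\,dv$.

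The main obstacle is precisely this last monotonicity step. Because $\phi(\gamma\eta)/\phi(\eta) = e^{(1-\gamma^2)\eta^2/2}$ grows exponentially in $\eta$ while $h(\eta)$ decays, naive term-by-term bounds lose constant factors and are too weak: the inequality is tight both at $\eta = 0$ and in the limits $\alpha \to 0^+$ and $\alpha \to 1^-$, leaving no slack anywhere on the boundary. The delicate analytic work is therefore an algebraic rearrangement of $D'(\eta)$ that matches the small-correlation growth of $h(\eta)$, given by Sheppard's formula $h(\eta) = \Phi_c(\eta)^2 + \int_0^\alpha \frac{e^{-\eta^2/(1+r)}}{2\pi\sqrt{1-r^2}}\,dr$, against the derivatives of the two terms on the right-hand side so that the combination becomes sign-definite.
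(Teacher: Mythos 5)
Your setup is exactly the paper's: you pass to $\eta = \Phi_c^{-1}(\mu)$, note that concavity in $\mu$ is equivalent to $\Lambda_{\eta\eta} + \eta\,\Lambda_\eta \le 0$ for $\eta > 0$, compute $h'(\eta) = -2\phi(\eta)\Phi_c(\gamma\eta)$, and reduce the claim to the pointwise inequality
\[
\Omega(\eta) := 2\,\Phi_c(\eta)\,\Phi_c(\gamma\eta) \;-\; h(\eta) \;-\; \gamma\,\frac{\phi(\gamma\eta)}{\phi(\eta)}\,\Phi_c(\eta)^2 \;\ge\; 0, \qquad \eta \ge 0,
\]
which is precisely the quantity $\Omega(\eta)$ in Appendix C, written there with $\frac{1-\alpha}{\beta}\,e^{\frac{\alpha(1-\alpha)}{\beta^2}\eta^2} = \gamma\,\phi(\gamma\eta)/\phi(\eta)$. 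Up to this point the arguments coincide.

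Where you stop, however, is exactly where the paper's proof is easiest, and the step you call the ``main obstacle'' is in fact a clean cancellation that you did not find. Differentiating $\Omega$, the $-2\phi(\eta)\Phi_c(\gamma\eta)$ term from $2\Phi_c(\eta)\Phi_c(\gamma\eta)$ cancels against $-h'(\eta) = 2\phi(\eta)\Phi_c(\gamma\eta)$, and the $-2\gamma\phi(\gamma\eta)\Phi_c(\eta)$ term cancels against the $\Phi_c'$-derivative of the last term, leaving the single sign-definite expression
\[
\Omega'(\eta) \;=\; -\,\gamma(1-\gamma^2)\,\eta\,\frac{\phi(\gamma\eta)}{\phi(\eta)}\,\Phi_c(\eta)^2 \;\le\; 0 \quad \text{for } \eta \ge 0,\ 0<\gamma<1,
\]
which is the paper's $\Omega'(\eta) = -\tfrac{2\alpha(1-\alpha)^2}{\beta^3}\,e^{\frac{\alpha(1-\alpha)}{\beta^2}\eta^2}\,\Phi_c(\eta)^2\,\eta$. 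Since also $\Omega(\eta) \to 0$ as $\eta \to \infty$, monotonicity immediately gives $\Omega(\eta)\ge 0$ on $[0,\infty)$, and the lemma follows. Two consequences for your write-up: (i) the correct ``anchor'' is the limit $\eta\to\infty$, not the value at $\eta=0$ --- your observation that $\Omega(0) \ge 0$ reduces to $\tan(\theta/2) \le 2\theta/\pi$ is a nice consistency check but plays no role in the argument, because $\Omega$ is \emph{decreasing}, so the constraint is tight at infinity, not at zero; and (ii) the exponentially growing factor $\phi(\gamma\eta)/\phi(\eta)$ that you flag as the source of difficulty is harmless once you see the cancellations, because it appears multiplied by $\Phi_c(\eta)^2 \lesssim e^{-\eta^2}$ and the whole term still tends to $0$. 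The Sheppard's-formula and Mill's-ratio machinery you invoke is unnecessary; the inequality closes with a direct derivative computation.
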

\begin{proof}
  Abusing notation, for this proof we denote $\Lambda(\eta) = \Lambda(\tau, \eta)$ and
  $$I(\eta) = \underset{X, Y \sim N(0, 1)}{\mathrm{Pr}}[X \geq \eta \mbox{ and } \alpha X + \beta Y \geq \eta]$$
  (that is, $\Lambda(\eta) = I(\eta) / \Phi_c(\eta)$).
       One has $\Phi_c'(\eta) = - \frac{e^{-\eta^2 / 2}}{\sqrt{2 \pi}}$ and
       $$
       I'(\eta) = - \sqrt{\frac{2}{\pi}} \cdot e^{-\eta^2 / 2} \cdot \Phi_c\left(\frac{(1 - \alpha)\eta}{\beta}\right).
       $$
       Combining, we get
       $$
       \Lambda'(\eta) = \frac{e^{-\eta^2 / 2}}{\sqrt{2 \pi}} \cdot \frac{I(\eta) - 2 \Phi_c(\eta) \Phi_c\left(\frac{(1 - \alpha) \eta}{\beta}\right)}{\Phi_c(\eta)^2}
       $$
       and
       $$
       \frac{d \Lambda(\Phi_c^{-1}(\mu))}{d \mu} = \frac{2 \Phi_c(\eta^*) \Phi_c\left(\frac{(1 - \alpha) \eta^*}{\beta}\right) - I(\eta^*)}{\Phi_c(\eta^*)^2} =: \Pi(\eta^*),
       $$
       where $\eta^* = \eta^*(\mu) = \Phi_c^{-1}(\mu)$. It is sufficient to show that $\Pi(\eta^*)$ is non-decreasing
       in $\eta^*$ for $\eta^* \geq 0$.

       We have
       \begin{multline*}
       \Pi'(\eta) = \sqrt{\frac{2}{\pi}} \cdot \frac{e^{-\eta^2 / 2}}{\Phi_c(\eta)^3} \left(2 \cdot \Phi_c(\eta)
       \Phi_c\left(\frac{(1 - \alpha)\eta}{\beta}\right) - I(\eta)
       - \frac{1 - \alpha}{\beta} \cdot e^{\frac{\alpha(1 - \alpha)}{\beta^2} \cdot \eta^2} \Phi_c(\eta)^2\right)
       \\=:\sqrt{\frac{2}{\pi}} \cdot \frac{e^{-\eta^2 / 2}}{\Phi_c(\eta)^3} \cdot \Omega(\eta).
       \end{multline*}

       We need to show that $\Omega(\eta) \geq 0$ for $\eta \geq 0$. We will do this by showing that
       $\Omega'(\eta) \leq 0$ for $\eta \geq 0$ and that $\lim_{\eta \to \infty} \Omega(\eta) = 0$.
       The latter is obvious, so let us show the former.
       $$
       \Omega'(\eta) = - \frac{2 \alpha(1 - \alpha)^2}{\beta^3} \cdot e^{\frac{\alpha(1 - \alpha)}{\beta^2} \eta^2} \cdot \Phi_c(\eta)^2 \cdot \eta \leq 0
       $$
       for $\eta \geq 0$.
\end{proof}

Now we are ready to prove Theorem~\ref{rho_lower}. Let us first assume that all the parts have measure $\Omega(1)$. Later we will show
that this assumption can be removed. W.l.o.g. we assume that functions from the family have subsets integers as a range. We have,
\begin{align*}
  p_1 & \leq \underset{\substack{u, v \sim S^{d-1}\\h \sim \mathcal{H}}}{\mathrm{Pr}}\bigr[h(u) = h(v) \bigm| \|u - v\| \leq \tau \bigr] \\
  & = \underset{h \sim \mathcal{H}}{\mathrm{E}}\left[\sum_i \mu(h^{-1}(i)) \underset{}{\mathrm{Pr}}[v \in h^{-1}(i) \mid u \in h^{-1}(i), \|u - v\| \leq \tau]\right] \\
  & \leq \underset{h \sim \mathcal{H}}{\mathrm{E}}\left[\sum_i \mu(h^{-1}(i)) \Lambda(\tau, \Phi_c^{-1}(\mu(h^{-1}(i))))\right] + o(1) \\
  & \leq \Lambda\left(\tau, \Phi_c^{-1}\left(\underset{h \sim \mathcal{H}}{\mathrm{E}}\left[\sum_i \mu(h^{-1}(i))^2\right]\right)\right) + o(1) \\
  & \leq \Lambda(\tau, \Phi_c^{-1}(p^*(\mathcal{H}))) + o(1),
\end{align*}
where:
\begin{itemize}
\item the first step is by the definition of $p_1$;
\item the third step is due to the condition $\mu(h^{-1}(i)) = \Omega(1)$ and Lemma~\ref{iso_lemma};
\item the fourth step is due to Lemma~\ref{conc_lemma} and the assumption $\mu(h^{-1}(i)) \leq 1/2$;
\item the final step is due to the definition of $p^*(\mathcal{H})$.
\end{itemize}

To get rid of the assumption that a measure of every part is $\Omega(1)$ observe that all parts with measure at most $\eps$ contribute to
the expectation at most $\eps \cdot T$, since there are at most $T$ pieces in total. Note that if $\eps = o(1)$, then $\eps \cdot T = o(1)$,
since we assume $T$ being fixed.

\section{Further description of experiments}
\label{app:experiments}
In order to compare meaningful running time numbers, we have written fast C++ implementations of both the cross-polytope LSH and the hyperplane LSH.
This enables a fair comparison since both implementations have been optimized by us to the same degree.
In particular, hyperplane hashing can be implemented efficiently using a matrix-vector multiplication sub-routine for which we use the eigen library (eigen is also used for all other linear algebra operations).
For the fast pseudo-random rotation in the cross-polytope LSH, we have written a SIMD-optimized version of the Fast Hadamard Transform (FHT).
We compiled our code with g++ 4.9 and the -O3 flag.
All experiments except those in Table \ref{table:randomdata} ran on an Intel Core i5-2500 CPU (3.3 - 3.7 GHz, 6 MB cache) with 8 GB of RAM.
Since 8 GB of RAM was too small for the larger values of $n$, we ran the experiments in Table \ref{table:randomdata} on a machine with an Intel Xeon E5-2690 v2 CPU (3.0 GHz, 25 MB cache) and 512 GB of RAM.

\colorlet{fillblue}{blue!35!white}
\newlength{\plothorizspacing}
\setlength{\plothorizspacing}{2.5cm}
\newlength{\plotvertspacing}
\setlength{\plotvertspacing}{-2.5cm}

\pgfplotsset{distanceplot/.style={%
  scale only axis,
  enlarge x limits=false,
  enlarge y limits=false,
  ylabel shift=-5pt,
  title style={yshift=-5pt},
  grid=none,
  no markers,
  every axis plot/.append style={line width=1pt},
  ymin=0,
  ymax=1,
  ytick={0,0.5,1},
  xmin=0,
  xmax=1.414,
  xtick={0,0.354,0.707,1.061,1.413},
  xticklabels={$0$,$\frac{\sqrt{2}}{4}$,$\frac{\sqrt{2}}{2}$,$\frac{3\sqrt{2}}{4}$,$\sqrt{2}$},
  xlabel={Distance to nearest neighbor},
  ylabel={Fraction of points},
  width=4cm,
  height=3cm,
  xtick pos=left,
  xtick align=outside,
}}

\begin{figure}[htb]
\centering
\begin{tikzpicture}
\begin{axis}[name=randomplot,distanceplot,title={Random data set}]
\addplot+[ybar interval,fill=fillblue] table[x=x,y=y] {plot_data/distance_histogram_random1m.txt};
\end{axis}
\begin{axis}[name=nytplot,distanceplot,anchor=south west,at=(randomplot.south east),xshift=\plothorizspacing,title={NYT data set}]
\addplot+[ybar interval,fill=fillblue] table[x=x,y=y] {plot_data/distance_histogram_nyt.txt};
\end{axis}
\begin{axis}[name=siftplot,distanceplot,anchor=north west,at=(randomplot.south west),yshift=\plotvertspacing,title={SIFT data set}]
\addplot+[ybar interval,fill=fillblue] table[x=x,y=y] {plot_data/distance_histogram_sift1m.txt};
\end{axis}
\begin{axis}[name=pubmedplot,distanceplot,anchor=north west,at=(nytplot.south west),yshift=\plotvertspacing,title={pubmed data set}]
\addplot+[ybar interval,fill=fillblue] table[x=x,y=y] {plot_data/distance_histogram_pubmed.txt};
\end{axis}
\end{tikzpicture}
\caption{Distance to the nearest neighbor for the four data sets used in our experiments.
The SIFT data set has the closest nearest neighbors.}
\label{fig:datasetsdistances}
\end{figure}
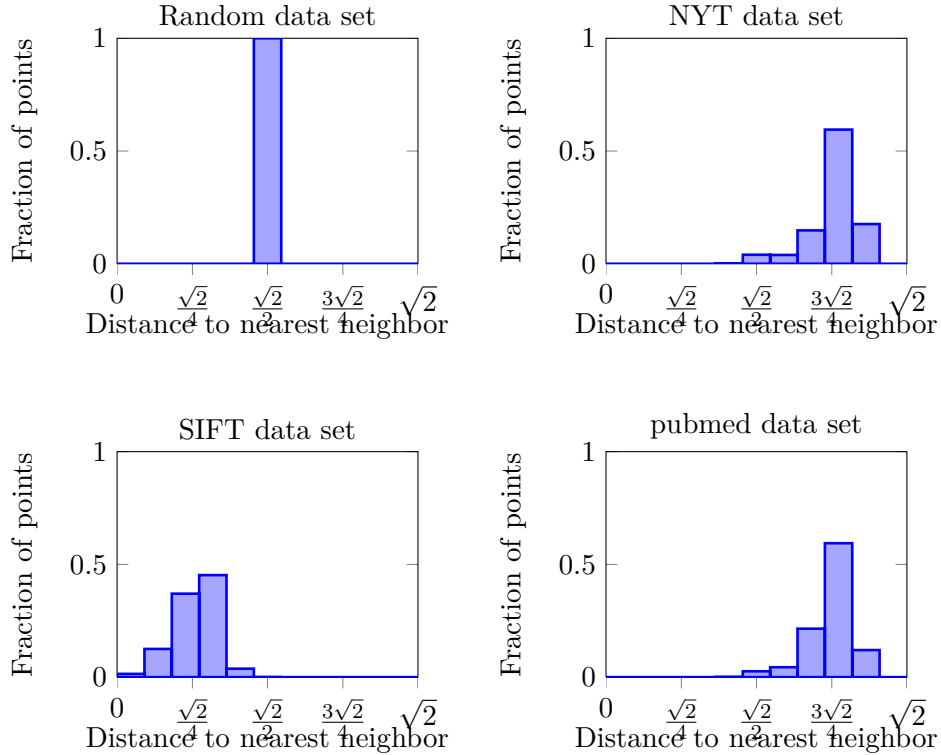

In our experiments, we evaluate the performance of the cross-polytope LSH on the following data sets.
Figure \ref{fig:datasetsdistances} shows the distribution of distances to the nearest neighbor for the four data sets.
\begin{description}[leftmargin=.5cm]
\item[random] For the random data sets, we generate a set of $n$ points uniformly at random on the unit sphere.
In order to generate a query, we pick a random point $q'$ from the data set and generate a point at distance $R$ from $q'$ on the unit sphere.
In our experiments, we vary the dimension of the point set between 128 and 1,024.
Experiments with the random data set are useful because we can study the impact of various parameters (e.g., the dimension $d$ or the number of points $n$) while keeping the remaining parameters constant.
\item[pubmed / NYT] The pubmed and NYT data sets contain bag-of-words representations of medical paper abstracts and newspaper articles, respectively \cite{Lichman2013}.
We convert this representation into standard tf-idf feature vectors with dimensionality about 100,000.
The number of points in the pubmed data set is about 8 million, for NYT it is 300,000.
Before setting up the LSH data structures, we set 1000 data points aside as query vectors.
When selecting query vectors, we limit our attention to points for which the inner product with the nearest neighbor is between 0.3 and 0.8.
We believe that this is the most interesting range since near-duplicates (inner product close to 1) can be identified more efficiently with other methods, and points without a close nearest neighbor (inner product less than 0.3) often do not have a semantically meaningful match.
\item[SIFT] We use the standard data set of one million SIFT feature vectors from \cite{JDS11}, which also contains a set of 10,000 query vectors.
The SIFT feature vectors have dimension $128$ and (approximately) live on a sphere.
We normalize the feature vectors to unit length but keep the original nearest neighbor assignments---this is possible because only a very small fraction of nearest neighbors changes through normalization.
We include this data set as an example where the speed-up of the cross-polytope LSH is more modest.
\end{description}

\begin{table}[htb]
\centering
\begin{tabular}{cccccccc}
\toprule
Method & $k$ & \specialcell{Last CP\\dimension} & \specialcell{Extra\\probes} & \specialcell{\textbf{Query}\\\textbf{time (ms)}} & \specialcell{Number of\\candidates} & \specialcell{CP hashing \\time (ms)} & \specialcell{Distances\\time (ms)} \\
\midrule
Single-probe & 1 & 128 & 0 & \textbf{6.7} & 39800 & 0.01 & 6.3 \\
Multiprobe & 3 & 16 & 896 & \textbf{0.51} & 867 & 0.22 & 0.16 \\
\bottomrule
\end{tabular}
\caption{Comparison of ``standard'' LSH using the cross-polytope (CP) hash vs.\ our multiprobe variant ($L = 10$ in both cases).
On a random data set with $n = 2^{20}$, $d = 128$, and $R = \sqrt{2} / 2$, the single-probe scheme requires $13\times$ more time per query.
Due to the larger value of $k$, the multiprobe variant performs fewer distance computations, which leads to a better trade-off between the hash computation time and the time spent on computing distances to candidates from the hash tables.}
\label{table:multiprobe}
\end{table}

\begin{table}[htb]
\centering
\begin{tabular}{cccccc}
\toprule
Data set size $n$ & $2^{20}$ & $2^{22}$ & $2^{24}$ & $2^{26}$ & $2^{28}$ \\
\midrule
HP query time (ms) & 2.6 & 7.4 & 25 & 63 & 185 \\
CP query time (ms) & 0.75 & 1.4 & 3.1 & 8.8 & 18 \\
\textbf{Speed-up} & \mbox{\boldmath$3.5\times$} & \mbox{\boldmath$5.3\times$} & \mbox{\boldmath$8.1\times$} & \mbox{\boldmath$7.2\times$} & \mbox{\boldmath$10.3\times$} \\
$k$ for CP & 3 (16) & 3 (64) & 3 (128) & 4 (2) & 4 (64) \\
\bottomrule
\end{tabular}
\caption{Average running times for a single nearest neighbor query with the hyperplane (HP) and cross-polytope (CP) algorithms on a random data set with $d = 128$ and $R = \sqrt{2}/2$.
The cross-polytope LSH is up to $10\times$ faster than the hyperplane LSH.
The last row of the table indicates the optimal choice of $k$ for the cross-polytope LSH and (in parenthesis) the dimension of the last of the $k$ cross-polytopes; all other cross-polytopes have full dimension 128.
Note that the speed-up ratio is not monotonically increasing because the cross-polytope LSH performs better for values of $n$ where the optimal setting of $k$ uses a last cross-polytope with high dimension.
}
\label{table:randomdata}
\end{table}

\end{document}